\newtheorem{definition}{Definition}   
\newtheorem{theorem}{Theorem}  
\newenvironment{proof}{\begin{IEEEproof}}{\end{IEEEproof}}
\renewcommand{\algorithmicrequire}{\textbf{Input:}}
\renewcommand{\algorithmicensure}{\textbf{Output:}}
\begin{document}

\title{Discovering High Utility Episodes in Sequences}

\author{
	
 	Wensheng Gan,~\IEEEmembership{Member,~IEEE,}
	Jerry Chun-Wei~Lin,~\IEEEmembership{Senior Member,~IEEE,}\\               
	Han-Chieh Chao,~\IEEEmembership{Senior Member,~IEEE,}
	and Philip S. Yu,~\IEEEmembership{Fellow,~IEEE}

	\IEEEcompsocitemizethanks{
		
		\IEEEcompsocthanksitem Wensheng Gan is with Harbin Institute of Technology (Shenzhen), Shenzhen, PR China, and with University of Illinois at Chicago, IL, USA. Email: wsgan001@gmail.com
		
		\IEEEcompsocthanksitem Jerry Chun-Wei Lin is with the Western Norway University of Applied Sciences, Bergen, Norway. Email: jerrylin@ieee.org
				
		\IEEEcompsocthanksitem Han-Chieh Chao is with the National Dong Hwa University, Hualien, Taiwan, R.O.C. Email: hcc@ndhu.edu.tw
		
		\IEEEcompsocthanksitem Philip S. Yu is with University of Illinois at Chicago, IL, USA. Email: psyu@uic.edu
	
Manuscript received September 2019.  (Corresponding author: Han-Chieh Chao)


}

}

\IEEEtitleabstractindextext{%

\begin{abstract}
 
Sequence data, e.g., complex event sequence, is more commonly seen than other types of data (e.g., transaction data) in real-world applications. For the mining task from sequence data, several problems have been formulated, such as sequential pattern mining, episode mining, and sequential rule mining.  As one of the fundamental problems, episode mining has often been studied. The common wisdom is that discovering frequent episodes is not useful enough.  In this paper, we propose an efficient utility mining approach namely UMEpi: \underline{U}tility \underline{M}ining of high-utility \underline{Epi}sodes from complex event sequence. We propose the concept of remaining utility of episode, and achieve a tighter upper bound, namely episode-weighted utilization (\textit{EWU}), which will provide better pruning. Thus, the optimized \textit{EWU}-based pruning strategies can achieve better improvements in mining efficiency.  The search space of UMEpi w.r.t. a prefix-based lexicographic sequence tree is spanned and determined recursively for mining high-utility episodes, by prefix-spanning in a depth-first way. Finally, extensive experiments on four real-life datasets demonstrate that UMEpi can discover the complete high-utility episodes from complex event sequence, while the state-of-the-art algorithms fail to return the correct results.  Furthermore, the improved variants of UMEpi significantly outperform the baseline in terms of execution time, memory consumption, and scalability.

\end{abstract}

\begin{IEEEkeywords}
	intelligent system, utility mining, episode mining, high-utility episode 
\end{IEEEkeywords}

}
\maketitle

\section{Introduction}

Discovering interesting patterns from various types of data (e.g., transaction data, sequence, graph, stream data, etc.) is the key problem in data mining and analytics \cite{chen1996data}. In the past decades, pattern mining \cite{agrawal1994fast,han2004mining} is well-studied for a wide range of applications. As a fundamental topic in Knowledge Discovery from Data (KDD) \cite{chen1996data}, the support-based pattern mining task has several related research subfields, including frequent itemset mining (FIM) \cite{agrawal1994fast,han2004mining} and association rule mining (ARM) \cite{samarah2011target,rashid2015mining} from transaction data, sequential pattern mining (SPM) \cite{pei2001mining} from sequential data, and frequent episode mining (FEM) \cite{mannila1997discepisodes} from a long event sequence. Up to now, many studies have been dedicated to these subfields and have arisen many important real-world applications. Note that the problem formulation of SPM and FEM are not the same although they have something in common. An episode is a non-empty ordered set of events \cite{mannila1997discepisodes}.  FIM and SPM have been well-studied, however, FEM is heavily under-developed.  One of the reasons is that FEM is more challenging and more complicated than FIM and SPM, due to the intrinsic property of utility and complexity in complex event sequences. In particular, computing statistics from a long event sequence turns out to be complex for episodes. Unfortunately, most of the developed techniques for FIM and SPM cannot be applied to address the task of frequent episode mining.

The common wisdom is that discovering frequent patterns (e.g., itemsets, sequential patterns, episodes) is not useful enough. However, most of the pattern mining algorithms in above subfields mainly adapt the co-occurrence frequency (aka support)  \cite{agrawal1994fast,han2004mining} to measure the interestingness of patterns. If the observed support of a pattern in process data is high, we assume that it deviates a lot from the expectation and this pattern is considered important. However, other implicit factors, such as the utility, interest, and risk, are not effectively utilized for evaluating the usefulness of the discovered knowledge. In some cases, frequent-based pattern mining approach might easily result to many trivial patterns, omitting the true interesting patterns \cite{2gan2018survey}. For example, in shopping behavior analysis, not only the association relationship between items/products but also the combination of products which have the most high utility can bring valuable knowledge for the retailers or managers. Inspired by the utility theory \cite{marshall2005principles}, a new mining framework namely utility-driven mining (abbreviated as \textit{utility mining} \cite{2gan2018survey}) has been introduced. Basically, the utility of a pattern refers to its importance, which can be measured in terms of utility, profit, risk, cost or other subjective measure depending on the user preference. Given a database and a user-specified minimum utility threshold (\textit{minUtil}), a pattern is called a high-utility pattern if its total utility in this database is no less than \textit{minUtil}. Utility mining has been a fruitful and active research field in data science, such as high-utility itemset mining (HUIM) \cite{liu2005two,ahmed2009efficient,tseng2013efficient,liu2012mining} and high-utility sequential pattern mining (HUSPM) \cite{alkan2015crom,wang2016efficiently,yin2012uspan}.

As mentioned before, in addition to the transaction data, there exists other types of data, such as complex event sequence, graph, and stream data. For example, an event sequence is a long sequence of events, and each event has its type and occurred time. An episode is a set of partially ordered events. Mannila \textit{et al.} \cite{mannila1997discepisodes} first introduces the frequent episode mining problem as well as two mining algorithms, WINEPI and MINEPI. Different from itemset, rule, and sequential pattern, episode is an interesting way for representing partial order relationships between events.  Complex event sequence is commonly seen and important, especially in data science. The topic of frequent episode mining (FEM) \cite{mannila1997discepisodes,ao2018discovering,ao2017mining} is different from the widely-studied frequent sequential pattern mining (SPM). For the problem of  high-utility episode mining (HUEM) in complex event sequence, it is more commonly seen in real-world, but more complicated and challenging than other tasks, e.g., HUIM and HUSPM. Discovering  high-utility episodes is difficult and challenging for two main reasons. Firstly, the search space of HUEM is quite huge than that of SPM and HUSPM, while HUEM lacks of effective pruning strategy to prune the search space. Secondly, the user-specified maximum time duration (\textit{MTD}) may easily resulting in pattern explosion. In general, the number of determined patterns will increase dramatically when \textit{MTD} increases. Ignoring the complex event sequence, the applicability of utility mining may be limited. Therefore, it is really a critical and challenging issue for discovering high-utility episodes in complex event sequences.  For utility-oriented episode mining, the pioneer work \cite{wu2013mining} formulates this task as a problem of high-utility episode mining (HUEM)  and proposes the projection-based UP-Span algorithm \cite{wu2013mining}. The goal of HUEM aims at discovering high-utility episodes which satisfy the maximum time duration (\textit{MTD}) constraint while their overall utilities are no less than a user-specified minimum utility threshold (\textit{minUtil}). Later, TSpan \cite{guo2014high}  makes use of the lexicographic sequence tree to model the HUEM task and improves the mining efficiency with two pruning strategies.  However, both UP-Span \cite{wu2013mining} and TSpan \cite{guo2014high} fail to successfully solve the HUEM problem and lead to the incorrect and incomplete mining results. Moreover, both of them  may easily encounter the mining efficiency problem and cause a lot of memory consumption. In particular, due to the intrinsic property of utility and complexity in complex event sequence data, high-utility episode mining algorithms should be fast and precise.

To this end, we propose an efficient utility-driven episode mining approach namely UMEpi: \underline{U}tility \underline{M}ining of high-utility \underline{Epi}sodes from complex event sequence. To the best of our knowledge, UMEpi not only extracts the correct high-utility episodes, but also achieves an acceptable efficiency. An interesting flexibility of UMEpi is that it can easily focus the discovering task on several real-life types of event sequences, including  serial episode,  simultaneous episode, or complex episode having both simultaneous episode and serial episode. To summarize, this paper has several contributions as follows.

\begin{itemize}
	\item We first introduce and define the concept of  the remaining utility of episode in event sequence. We then formulate an alternative definition of Episode-Weighted Utilization (\textit{EWU}), which is able to provide an accurate formulate of upper bound. UMEpi adapts the utility and sequence-order among event types as the key criterion for evaluating high-utility episodes (HUEs).
	
	\item  Flexibility of UMEpi. There is no need to make assumptions about the process event sequence is simultaneous, serial or complex. UMEpi can fast discover HUEs from any type of event sequence.
	
	\item Two optimized \textit{EWU}-based pruning strategies are developed in the UMEpi algorithm for efficiently mining high-utility episodes in depth-first way, without scanning the original sequence many times. UMEpi is able to early prune the unpromising episodes that cannot be high-utility, and does not need to perform the projection operation which costs memory  and is time-consuming. 
	
	\item Experiments on four real-world datasets have shown the effectiveness and high efficiency of the proposed UMEpi algorithm with two user-specified parameters: \textit{MTD} and \textit{minUtil}.

\end{itemize}

The rest of this paper is organized as follows. Some related works are briefly reviewed in Section \ref{sec:relatedwork}. The preliminaries and problem statement of high-utility episode mining are given in Section \ref{sec:preliminaries}.  Details of the proposed UMEpi algorithm are described in Section \ref{sec:algorithm}. The evaluation of effectiveness and efficiency of UMEpi are provided in Section \ref{sec:experiments}.  Finally, some conclusions are given in Section \ref{sec:conclusion}. 

\section{Related Work}
\label{sec:relatedwork}
To summarize, this paper is highly related to existing works in support-based episode mining, utility-based itemset/sequence mining, and utility-based episode mining.

\subsection{Support-based Episode Mining}

Discovering interesting patterns from various types of data is the key problem in the field of Knowledge Discovery from Data \cite{chen1996data}. In the past decades, a large amount of algorithms has been developed for mining interesting patterns from various types of data, such as  \cite{samarah2011target,rashid2015mining}. Most of these studies, however, mainly use support \cite{agrawal1994fast} and confidence \cite{agrawal1993mining} to discover interesting patterns, such as frequent itemsets \cite{agrawal1994fast,han2004mining,agrawal1993mining}, frequent sequential patterns \cite{pei2001mining}, frequent episodes \cite{mannila1997discepisodes,laxman2005discovering}. From itemsets that have been found frequent (or otherwise interesting), association rule \cite{agrawal1994fast} which consists of frequent itemsets has a high confidence. For an interesting pattern in association-rule mining, the occurrence of one set is regarded as a good predictor of another. A first exploration into mining frequent patterns in sequential data still presumed a transaction data \cite{agrawal1995mining}. It assumes the sequences as transactions instead of sets. Note that the data format of sequential nature  is similar but not the same to that of typical frequent itemset mining (FIM) \cite{han2004mining,agrawal1995mining}. The support-based sequential pattern mining (SPM) \cite{pei2001mining} is often different from FIM. In general, the search space of an SPM algorithm is potentially huge due to the combinatorial explosion of sequence data. Therefore, SPM is more challenging than FIM. Later, Mannila \textit{et al.}  \cite{mannila1997discepisodes} first introduced an interesting framework for mining frequent episodes in single, long sequences \cite{mannila1997discepisodes}, where events occur at certain time points. This pioneer work leads to a new research topic named frequent episode mining (FEM) \cite{mannila1997discepisodes}.

The objective of FEM is to discover episodes whose occurrences exceed a minimum support threshold, with window or time maximum duration (\textit{MTD}) constraint. As formulated, episode is an interesting way for representing partial order relationships between events. The uniqueness of an episode in an event sequence is determined by the containing events. According to the existing studies, many frequency measures have been proposed for FEM, such as the (fixed-width) windows \cite{mannila1997discepisodes}, minimal occurrences \cite{mannila1997discepisodes}, head frequency \cite{huang2008efficient},  total frequency \cite{iwanuma2004anti}, and non-overlapped occurrences \cite{laxman2005discovering}. Ao \textit{et al.} \cite{ao2017mining} addressed the new problem of mining precise positioning episode rules. In literature, most of the FEM algorithms use the minimal occurrence  \cite{meger2004constraint} to measure the frequencies of interesting episodes. Despite of the vast amounts of research efforts in support-based episode mining, fewer studies of FEM consider the utility factor into account. For example, when performing a market basket analysis on retail data, many frequent but not profitable episodes may be found. In the risk perdition task, the rare or frequent episodes cannot successfully capture the high risk events.

\subsection{Utility-based Itemset/Sequence Mining}

To address the early mentioned problems, utility-oriented pattern mining (also called utility mining) \cite{tseng2013efficient,yao2006mining,gan2018survey} is proposed as a new data mining framework. Utility mining aims at identifying the high-utility patterns but not the frequent ones.  The occur quantity and unit utility of objects/items, as well as other implicit factor are taken into account. Up to now, the problem of utility mining has been extensively studied, and leads to many related research subfields, including high-utility itemset mining (HUIM) \cite{liu2005two,yun2017efficient}, high-utility sequential pattern mining (HUSPM) \cite{yin2012uspan}, and high-utility rule mining (HURM) \cite{mai2017lattice}.

For the task of HUIM from itemset-based data,  many algorithms have been developed,  such as Two-Phase \cite{liu2005two}, IHUP \cite{ahmed2009efficient}, UP-growth \cite{tseng2010up}, UP-growth+ \cite{tseng2013efficient}, and HUI-Miner \cite{liu2012mining}. To be more specific, these existing HUIM algorithms can be mainly classified into the following categories: Apriori-like, tree-based, utility-list-based, and hybrid approaches, as reviewed in \cite{2gan2018survey}. The Two-Phase \cite{liu2005two} algorithm is the early Apriori-like approach for HUIM, and it first introduced the transaction-weighted utilization (\textit{TWU}) concept \cite{liu2005two}. Later, some tree-based HUIM algorithms, e.g., IHUP \cite{ahmed2009efficient}, HUP-tree \cite{huptree}, UP-growth \cite{tseng2010up}, UP-growth+ \cite{tseng2013efficient}, are developed and all outperform the Apriori-like ones. Liu \textit{et al.} \cite{liu2012mining} then introduced the utility-list structure that utilizes a concept namely remaining utility. Towards a better mining efficiency, several more efficient approaches have been developed for discovering high-utility itemsets, such as FHM \cite{fournier2014fhm}, HUP-Miner \cite{krishnamoorthy2015pruning}, and EFIM \cite{zida2017efim}. In addition to efficiency, the effectiveness issue of utility mining has been extensively studied, for example, several interesting works are reported in \cite{gan2018survey,nguyen2019mining,yun2017efficient,yun2018damped,tseng2016efficient}. An up-to-date overview of the current development of utility mining can be referred to \cite{2gan2018survey}.

For the task of HUSPM \cite{alkan2015crom,wang2016efficiently,yin2012uspan} from sequence-based data, it uses some special data structures and upper bound on utility of sequence. The early Apriori-like approach for HUSPM is the US \cite{ahmed2010mining} and UL \cite{ahmed2010mining} algorithms which utilize the Sequence-Weighted Utilization (\textit{SWU}) model \cite{ahmed2010mining,yin2012uspan}. Yin \textit{et al.} \cite{yin2012uspan} presented a generic definition of the HUSP mining framework and proposed a new mining algorithm named USpan.  Alkan \textit{et al.} proposed HuspExt \cite{alkan2015crom} with a Cumulate Rest of Match (CRoM) based pruning technique.  The HUS-Span algorithm \cite{wang2016efficiently} introduces a data structure namely utility-chain and utilizes prefix extension utility (\textit{PEU}) as an upper bound. However, HUS-Span is not efficient enough since the generate-and-test approach creates an overflow of candidate sequences. Recently, two more efficient algorithm, such as ProUM \cite{gan2019proum} and HUSP-ULL \cite{gan2019fast}, were developed to fast identify high-utility sequential patterns. ProUM utilizes the utility-array structure and \textit{PEU} \cite{gan2019proum} upper bound, and HUSP-ULL utilizes the utility-linked list \cite{gan2019fast} structure and two powerful pruning strategies.  More current development of HUSPM can be referred to in literature reviews \cite{2gan2018survey,gan2018privacy}.

\subsection{Utility-based Episode Mining}

Support constraint has been a popular measure for extracting episodes. Although there are many studies on frequent episode mining and high-utility itemset/sequence mining, utility-driven episode mining has been studied less. In order to discovering high-utility episodes rather than those frequent ones, Wu \textit{et al.} \cite{wu2013mining} first proposed an interesting concept called Episode-Weighted Utilization (\textit{EWU}) and introduced the problem statement of high-utility episode mining (HUEM). To find interesting episodes whose utilities exceed the expected utility value, the UP-Span algorithm \cite{wu2013mining} and two pruning strategies were further presented as the pioneer work. The Episode-Weighted Utilization (\textit{EWU}) in HUEM  is equivalent to the Sequence-Weighted Utilization (\textit{SWU}) in HUSPM. However, the HUEM problem is different from HUSPM problem, since the former is considered more complicated and challenging than the later.  Most of the developed techniques for FEM, HUIM, and HUSPM cannot directly be applied to HUEM. As an enhanced algorithm of UP-Span, TSpan \cite{guo2014high} was proposed to discover high-utility episodes (HUEs) using two tighter upper bounds, which can reduce the search space over the prefix tree.  Consider the rule generation, Lin \textit{et al.} \cite{lin2015discovering} studied the problem of  utility-based episode rule mining in complex event sequences. However, both UP-Span and TSpan suffer from several performance drawbacks: 1) The concept of \textit{EWU} and pruning strategies are only based on an approximate upper bound. 2) Their results contain several errors, e.g., incorrect and incomplete. 3) Their mining efficiency in terms of running time, memory consumption and scalability might not efficient enough to deal with a long event sequence.  Therefore, we should develop the fast and precise algorithms for discovering high-utility episodes.


\section{Preliminaries and Problem Formulation}
\label{sec:preliminaries}

Based on the previous studies \cite{mannila1997discepisodes,wu2013mining,laxman2005discovering}, some concepts and principles of high-utility episode mining are introduced firstly.

\subsection{Preliminaries of Utility Mining on Event Sequence}

An \textit{event} is defined as a pair ($e, T_i$) where $e$ is the event type and $T_i \in N^+$ is the \textit{occur time} when this event happens. A partial ordered collection of events is called an \textit{episode}.  An \emph{event sequence} is defined as an ordered sequence of simultaneous event sets, such that $S$ = $<$\{($s_1, T_1$), ($s_2, T_2$), $ \dots$, ($s_n, T_n$)\}$>$ where each simultaneous event set $s_i \in S$ consists of several simultaneous or serial events. Each simultaneous event set is associated with a unique time point $T_i$ ($T_i < T_j$), for all 1 $\leq i < j \leq n$.  Let $e$ be a distinct event of $E$ = $\{e_1,$ $e_2,$ $\dots, e_m\}$. Each event $e \in E$ in event sequence $S$ is associated with a unique positive value $pr(e)$ namely its \emph{unit utility} (also called \textit{external utility}). For each event $e$ in $T_i$, a positive number $q(e, T_i)$ is called its \emph{occur quantity} (also called \textit{internal utility}). An $l$-episode means that the length of this episode is $l$.

\begin{figure}[htbp]
	\centering 
	\includegraphics[scale=1.1]{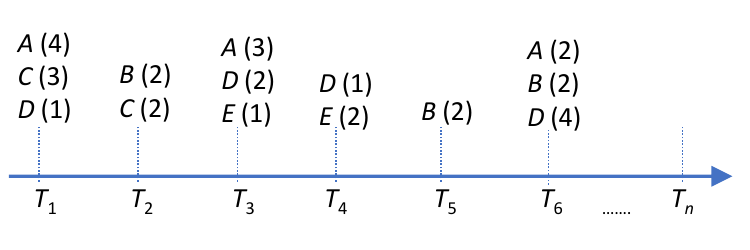}
	\captionsetup{justification=centering}
	\caption{A complex event sequence with occur quantity.}
	\label{fig:data}	
\end{figure}

\begin{table}
	\centering
	\caption{External unit utility value}
	\begin{tabular}{|c|c|c|c|c|c|} \hline
		\textbf{Event} & $A$ & $B$ & $C$ & $D$ & $E$ \\ \hline
		\textbf{Utility} (\$) & 1 & 5 & 2 & 3 & 7 \\ \hline
	\end{tabular}
	\label{figUnitProfits}
\end{table}

\begin{definition}[Complex episode with simultaneous and serial concatenations]
	\rm  Let $\alpha$ = $<$$(E_1)$, $(E_2)$, $\ldots$, $(E_x)$$>$ and $\beta$ = $<$$(E_1')$, $(E_2')$, $\ldots$, $(E_y')$$>$ be two different episodes. The simultaneous concatenation of $\alpha$ and $\beta$ is defined as \textit{Simult-Concatenate}($\alpha$, $\beta$) = $<$$(E_1)$, $(E_2)$, $\ldots$, $(E_x \sqcup E_1')$, $(E_2')$, $\ldots$, $(E_y')$$>$. The serial concatenation of $\alpha$ and $\beta$ is defined as \textit{Serial-Concatenate}($\alpha$, $\beta$) = $<$$(E_1)$, $(E_2)$, $\ldots$, $(E_x)$, $(E_1')$, $(E_2')$, $\ldots$, $(E_y')$$>$. Then the simultaneous episodes are related to simultaneous concatenation, and the serial episodes are related to serial concatenation. In general, a complex episode always has both simultaneous episode and serial episode. 
\end{definition}

For example, Fig. \ref{fig:data} shows a complex event sequence $S$ = $<$(\{$A$ (4), $C$ (3)\}, $T_1$), (\{$B$ (2), $C$ (2)\}, $T_2$), (\{$A$ (3), $D$ (2), $E$ (1)\}, $T_3$), (\{$D$ (1), $E$ (2)\}, $T_4$), (\{$B$ (2)\}, $T_5$), (\{$A$ (2), $D$ (4)\}, $T_6$)$>$. Among this sequence, $<$$A$ (4)$>$ and $<$$C$ (3)$>$ are the simultaneous episodes in $T_1$, while $<$$A$ (4) $\rightarrow$ $C$ (2)$>$ are the serial episodes. Note that the simultaneous episode does not consider the sequence-order inside itself, e.g., $<$\{$A$ (4), $C$ (3)\}$>$ and $<$\{$C$ (3), $A$ (4)\}$>$ are the same episode. However, $<$$A \rightarrow C$$>$ and $<$$C \rightarrow A$$>$ are two different serial episodes since their sequence-orders are not the same.

\begin{definition}[Occurrence] 
	\rm Given an episode $\alpha$ = $<$$(E_1)$, $(E_2)$, $\ldots$, $(E_k)$$>$, the time point interval $[T_s, T_e]$ is called an occurrence of $\alpha$ iff: (1) $\alpha$ occurs in $[T_s, T_e]$, and (2) the first simultaneous event set in $\alpha$ occurs at  $T_s$ (called \textit{start time}) and the last simultaneous event set $E_k$ of $\alpha$ occurs at $T_e$ (called \textit{end time}). The set of all occurrences of $\alpha$ in $S$ is denoted as $occSet(\alpha)$. 
\end{definition}

In this paper, note that a complex episode may have either simultaneous/serial episode or two types of episode. Checking whether an episode occurs in an event sequence is an NP-complete problem \cite{tatti2011mining}. For example, the set of all the occurrences of the episode $<$$\{A, D\} \rightarrow B$$>$ in Fig. \ref{fig:data} is $occSet$($<$$\{A, D\} \rightarrow B$$>$) = \{$[T_1, T_2]$, $[T_1, T_5]$, $[T_1, T_6]$, $[T_3, T_5]$, $[T_3, T_6]$\}. Up to now, several measures of occurrences (e.g., minimal occurrence \cite{mannila1997discepisodes}, non-overlapped occurrence \cite{laxman2007nonoverlapepisodes}) have been introduced for frequent episode mining.

\begin{definition}[Minimal occurrence] 
	\rm Given two  time point intervals $[T_s, T_e]$ and $[T_s', T_e']$ of occurrences of an episode $\alpha$, $[T_s', T_e']$ is called the time point sub-interval of $[T_s, T_e]$ if $T_s \leq T_s'$ and $T_e' \leq T_e$.  $[T_s, T_e]$ is called a minimal occurrence of episode $\alpha$ if the following conditions are satisfied:  (1) $[T_s, T_e]$ is an occurrence of $\alpha$, and (2) there is none occurrence $[T_s', T_e']$ of $\alpha$ such that $[T_s', T_e']$ is a sub-interval of $[T_s, T_e]$. To summarize, minimal occurrence is a kind of occurrence of episode which cannot contain any other occurrence of same episode. Thus, for an episode, there is only one precise minimal occurrence with respect to each occurrence. Let $mo$($\alpha$) denote the minimal occurrences of $\alpha$ w.r.t. $[T_s, T_e]$, then the complete set of minimal occurrences of $\alpha$ in $S$ is $moSet$($\alpha$).

\end{definition}

For example, in Fig. \ref{fig:data}, the time point interval $[T_1, T_2]$ is a minimal occurrence of the episode $<$$\{A, D\} \rightarrow B$$>$, and $moSet$($<$$\{A, D\} \rightarrow B$$>$) = \{$[T_1, T_2]$, $[T_3, T_5]$\}. Thus, we can see that the minimal occurrence is the shortest time interval that contains a particular episode.

\begin{definition}[Maximum time duration]
	\rm The maximum time duration is a user-specified constraint for the discovered episodes. An episode $\alpha$ satisfies the maximum time duration constraint iff $(T_e - T_s) \leq MTD $\footnote{Note that both UP-Span and TSpan use $(T_e - T_s + 1) \leq MTD $, but we use $(T_e - T_s) \leq MTD $ for easier understanding and more precise.}. In this paper, we assume that $mo(\alpha)$ should always satisfy the maximum time duration constraint.

\end{definition}

\begin{definition}[Sub-episode and super-episode] 
	\rm Given two episodes $\alpha$ = $<$$E_1, E_2, \ldots, E_n$$>$ and $\beta $ = $<$$E_1', E_2', \ldots, E_m'$$>$ where $m \leq n$, the episode $\beta $ is called a \textit{sub-episode} of $\alpha$ iff there exists $m$ integers $1 \leq i_1 < i_2 < \ldots < i_m \leq n$ such that $E_{k}' \in E$ for $ 1 \leq k \leq m \leq n $. In this case, conversely, episode $\alpha$ is called the \textit{super-episode} of $\beta $.
\end{definition}

For example, the episode $<$$\{A, D\} \rightarrow B$$>$ is called a \textit{super-episode} of $<$$A$$>$, $<$$\{A, D\}$$>$, $<$$A \rightarrow B$$>$, and $<$$D \rightarrow B$$>$.

\begin{definition}[Utility of a simultaneous event set w.r.t. time point]
	\rm The utility of a simultaneous event set $E \subseteq S$ in a time point $T_i$ is $ u(E, T_i)$ = $\sum_{e \in E} {u(e, T_i)}$, in which $u(e, T_i)$ is the utility of an event $e \in E$ in this time point $T_i$, such as $u(e, T_i)$ = $pr(e) \times q(e, T_i)$. Hence, $ u(E, T_i)$  represents the all utilities generated by all events $e \in E$ in each time point $T_i$. 
\end{definition}

In Fig. \ref{fig:data}, the utility of an event \{$A$\} in time point $T_3$ is $u(A, T_3)$ = $3 \times \$1$ = \$3, and the utility of event set $\{A, D, E\}$ in $T_3$ is  $u(\{A, D, E\}, T_3)$ =  $u(A, T_3)$ + $u(D, T_3)$ + $u(E, T_3)$ = 3 $\times$ \$1 + 2 $\times $ \$3  + 1 $\times $ \$7 = \$16.

\begin{definition}[Total utility of an event sequence]
	\rm Given a complex event sequence $S$, the total utility of all events in a time point $ T_i $, denoted as $tu(T_i)$, is defined as $tu(T_i)$ = $ \sum_{e_{j}\in T_i}u(e_{j}, T_i) $, where $e_j$ is the $j$-th event in $ T_i $. Then the total utility of $S$, denoted as $TU$, is defined as: $TU$ = $\sum_{T_i \in S}tu(T_i) $.
\end{definition}

Consider the first time point in Fig. \ref{fig:data}, $ tu(T_{1}) $ = $ u(A, T_{1}) $ + $ u(C, T_{1}) $ + $ u(D, T_{1}) $ = \$4 + \$6 + \$3 = \$13. Then the utilities of \textit{T}$ _{1} $ to \textit{T}$ _{6} $ can be  calculated as \textit{tu}($ T_{1} $) = \$13, \textit{tu}($ T_{2} $) = \$14, \textit{tu}($T_{3} $) = \$16, \textit{tu}($T_{4}$) = \$17, \textit{tu}($T_{5}$) = \$10, and \textit{tu}($T_{6}$) = \$24, respectively. Thus, the total utility in $S$ is $TU$ = \$13 + \$14 + \$16 + \$17 + \$10 + \$24 = \$94. To determine the utility of an episode, we need to firstly define how we count its occurrences. In the task of HUEM, we use the minimal occurrence to calculate the utility, which is defined below.

\begin{definition}[Utility of an episode w.r.t. a minimal occurrence]
	\rm Let $mo(\alpha)$ = $[T_s, T_e]$ be the minimal occurrence of the episode $\alpha$ = $<$$(E_1), (E_2), \ldots, (E_k)$$>$, where each simultaneous event set $E_i \in \alpha $ is associated with a time point $T_i$. Under the definition of minimal occurrence, the utility of the episode $\alpha$ w.r.t. $ mo(\alpha)$ can be defined as $u(\alpha, mo(\alpha))$ = $\sum _{i=1}^{k}u(E_i, T_i)$, $E_i \subseteq \alpha$ and $T_s \leq T_i \leq T_e$. 
\end{definition}

Referring to the previous example, assume $\alpha$ is $<$$\{A, D\} \rightarrow B$$>$. Due to $moSet$($<$$\{A, D\} \rightarrow B$$>$) = \{$[T_1, T_2]$, $[T_3, T_5]$\}, we have $u(\alpha, mo_{1}(\alpha))$ = $u(\{A, D\}, T_1)$ + $u(B, T_2)$ = \$7 + \$10 = \$17, and $u(\alpha, mo_{2}(\alpha))$ = $u(\{A, D\}, T_3)$ + $u(B, T_5)$ = \$9 + \$10 = \$19.

\begin{definition}[Utility of an episode w.r.t. an event sequence]
	\rm  Consider the entire event sequence $S$, let $u(\alpha)$ denote the total utility of an episode $\alpha$ in $S$ and $moSet$($\alpha$) = [$mo_{1}(\alpha), mo_{2}(\alpha), \ldots, mo_{n}(\alpha)$] denote the complete set of minimal occurrences of $\alpha$ in $S$, then $ u(\alpha)$ = $\sum_{i=1}^{n} u(\alpha, mo_{i}(\alpha)) $.  
\end{definition}

\begin{definition}[High-utility episode]
	\rm Given a complex event sequence $S$, the maximum time duration \textit{MTD}, and a user-specified minimum utility threshold \textit{minUtil}, an episode is said as a high-utility episode (abbreviated as HUE), iff its total utility in $S$ satisfies $u(\alpha) \geq minUtil \times TU$. Otherwise, the episode is called a low-utility episode (abbreviated as LUE). 
\end{definition}

In Fig. \ref{fig:data}, the utility of the episode $<$$\{A, D\} \rightarrow B$$>$ is $u$($<$$\{A, D\} \rightarrow B$$>$) = \$17 + \$19 = \$36. If we assume \textit{MTD} = 2 and \textit{minUtil} = 50\%, the episode $<$$\{A, D\} \rightarrow B$$>$ is less than 50\% $\times$ \$94 (= \$47), thus it is a low-utility episode. However, the episode $<$$\{D, E\} \rightarrow B \rightarrow \{A, B, D\}$$>$, which utility is \$51, is a high-utility episode. When \textit{MTD} = 2 and \textit{minUtil} = 50\%, the complete set of HUEs in Fig. \ref{fig:data} is shown in Table \ref{table:HUEs}. Notice that the utility of $<$$\{B, C\} \rightarrow \{A, D, E\} \rightarrow \{D, E\}$$>$   calculated by UP-Span is equal to \$43, which is incorrect.

\begin{table}
	\centering
	\caption{Final high-utility episodes in the running example (the true utility is denoted as \textit{Utility}, and the results of \textit{Utility}* are derived by UP-Span)}
	\begin{tabular}{|c|c|c|} \hline
		\textbf{Episode}   &   \textbf{\textit{Utility}}    &   \textbf{\textit{Utility}*}     \\ \hline
		$<$$\{B, C\} \rightarrow \{A, D, E\} \rightarrow \{D, E\}$$>$ &  \$47   &  \$43   \\ \hline
		$<$$D  \rightarrow B \rightarrow \{A, D\}$$>$ &  \$49    &  \$49     \\ \hline
		$<$$\{D, E\} \rightarrow B \rightarrow \{B, D\} $$>$ &  \$49     &  \$49    \\ \hline
		$<$$\{D, E\} \rightarrow B \rightarrow \{A, B, D\} $$>$ &  \$51     &  \$51    \\ \hline
		$<$$E \rightarrow B \rightarrow \{A, B, D\} $$>$ &  \$48    &  \$48     \\ \hline
	\end{tabular}
	\label{table:HUEs}
\end{table}

\subsection{Problem Formulation}
Typically, frequent or high-utility episode mining is carried out from one long event sequence which consists of a large amount of  time-stamped events. Based on the above definitions, then we have the following problem formulation.
\begin{definition}
	\label{def_HUEM}
	\rm Given a complex event sequence $S$ with simultaneous or series events that having external and internal utility of events,  the maximum time duration (\textit{MTD}) as constrain, and a user-specified minimum utility threshold (\textit{minUtil}), the problem of high-utility episode mining (HUEM) aims at discovering all the episodes whose utilities are no less than $minUtil \times TU$ with the \textit{MTD}  constraint. 
\end{definition}

For the task of HUEM, there are two user-specified parameters: \textit{MTD} and \textit{minUtil}. Here an episode (also known as serial episode, or complex episode \cite{mannila1997discepisodes}) refers to a totally ordered set of events. To summarize, the addressed task of HUEM aims at identifying all high-utility episodes which overall utilities exceed an expected threshold ($minUtil \times TU$)  when dealing with a complex event sequence which consists of a large set of events. Discovering high-utility episodes is a good way to unearth utility-driven information and knowledge in the sequence data.

\section{Proposed UMEpi Algorithm}
\label{sec:algorithm}

In this section, we propose an efficient UMEpi algorithm to discover high-utility episodes that satisfy the constraints of \textit{MTD} and \textit{minUtil}. UMEpi discovers HUEs by spanning the search space w.r.t. an conceptual lexicographic sequence (LS)-tree. Moreover, the remaining utility of episode and a tight upper bound namely episode-weighted utilization (\textit{EWU}) are developed and utilized in the pruning strategies. Details of the downward closure property of \textit{EWU}, the pruning strategies with optimized \textit{EWU}, and the main procedures of UMEpi are described below, respectively.

\subsection{Downward Closure Property}

\begin{definition}
	\rm  ($I$-Concatenation and $S$-Concatenation). For an $l$-episode $\alpha$, when an event is appended to the end of $\alpha$, it will construct an extended  ($l$+1)-episode, which is a 1-extension of episode $\alpha$. This 1-extension operation is called \textit{concatenation}. To be more specific, if the duration time of the new extended episode is the same as that of $\alpha$, this operation is an $I$-Concatenation. However, if the time duration of the new extended episode is that of $\alpha$ increased by 1, this operation is called an $S$-Concatenation. 

\end{definition}

For example, consider an 2-episode $<$$\{B, C\}$$>$, the concatenation of $<$$\{B, C, A\}$$>$ is its $I$-Concatenation, while  $<$$\{B, C\}  \rightarrow \{A\}$$>$ is the result of an $S$-Concatenation. Note that $<$$\{B, C, A, E\}$$>$ and $<$$\{B, C\}  \rightarrow \{A, D, E\}$$>$ are also the extended episodes of $<$$\{B, C\}$$>$, which are called the 2-extension and 3-extension, respectively.

\begin{definition}[Lexicographic sequence tree]
	\rm  In the lexicographic sequence tree \cite{ayres2002sequential}, a) the root node\footnote{Without ambiguity, the terms episode and node will be used interchangeably in this paper.} of the prefix-based tree is empty; b) for a parent node (also called prefix node), all the child nodes are generated by the $I$-Concatenation or $S$-Concatenation; and c) all the child nodes of a prefix node are listed in a specific order (e.g., incremental order, arbitrary order, or lexicographic order).
\end{definition}

Given a long sequence $S$, the lexicographic sequence (LS)-tree is a structure that captures all the possible episodes. Specifically, the total number of all the possible episodes in this search space is extremely huge \cite{laxman2005discovering,tatti2011mining}. Obviously, the brute-force mechanism (e.g., enumerate and then determine all the possible episodes) is not an acceptable solution. Since many datasets always contain enormous piles of low-utility patterns, selecting only the high-utility ones is not easy. Most of the existing studies have been demonstrated that the utility measure \cite{liu2005two} are neither monotonic nor anti-monotonic. Based on the definition of HUE, the HUE does not hold anti-monotonicity. In other words, a HUE may have a lower, equal or higher utility than any of its sub-episodes.  Without holding the anti-monotonicity, it is hard to efficiently reduce the search space of the addressed problem of HUEM.

For HUEM, Episode-Weighted Utilization (\textit{EWU}) \cite{wu2013mining} was proposed as a upper bound on utility of an episode in a complex event sequence.  Although these approaches achieve a speed-up of several orders of magnitude over the brute-force algorithm, their common drawback is that they all use an incorrect \textit{EWU} value, and output the incorrect results. Thus, the \textit{EWU} value in existing algorithms is not true and may cause the incorrect results. Therefore, the previous works on HUEM can not really extract the complete HUEs.

\begin{definition}[Episode-Weighted Utilization \cite{wu2013mining}]
	\rm Let $mo(\alpha)$ = $[T_s, T_e]$ be a minimal occurrence of the episode $\alpha$ = $<$$(E_1$), $(E_2)$, $\ldots$, $(E_{k-1})$, $(E_k)$$>$, where each simultaneous event set $E_i \in \alpha $ is associated with a time point $T_i$ ($1 \leq i \leq k$) and $mo(\alpha)$ satisfies \textit{MTD}. The episode-weighted utilization of $\alpha$ w.r.t. $mo(\alpha)$ is defined as: \textit{EWU}($\alpha$, $mo(\alpha)$) =  $\sum _{i=1}^{k-1}u(E_i, T_i)$ + $\sum _{i=T_{e}}^{T_{s} + MTD}u(SE_i, T_i)$ in \cite{wu2013mining}, or \textit{EWU}($\alpha$, $mo(\alpha)$) =  $\sum _{i=1}^{k}u(E_i, T_i)$ + $\sum _{i=T_{e}}^{T_{s} + MTD}u(SE_i, T_i)$ in \cite{guo2014high}, where $SE_i$ is the simultaneous event set at the time point $T_i$ in $S$. Thus, for  $\alpha$ in $S$, we have the accumulative  \textit{EWU} w.r.t. $moSet$($\alpha$) = [$mo_{1}(\alpha)$, $mo_{2}(\alpha)$, $\ldots$, $mo_{n}(\alpha)$], such that  \textit{EWU}($\alpha$) = $\sum _{i=1}^{n}EWU(\alpha, mo_{i}(\alpha))$ \cite{wu2013mining}.
	
\end{definition}

\begin{definition}[Promising episode and unpromising episode]
	\rm An episode $\alpha$ is called a promising episode iff \textit{EWU}$(\alpha) \geq minUtil \times TU$. Otherwise it is an unpromising episode. 
\end{definition}

To be more specific, for a given \textit{MTD} and $[T_s, T_e]$ of $\alpha$, then the potential minimal occurrence $mo(\alpha)$ only occurs at interval [$T_e, T_s$ + \textit{MTD}]. For example, assume \textit{MTD} = 2 in Fig. \ref{fig:data}, the \textit{EWU} of $\alpha$ = $<$$\{A, D\} \rightarrow B$$>$  w.r.t. its minimal occurrence $[T_1, T_2]$ is calculated as  $EWU$($<$$\{A, D\} \rightarrow B$$>$, $[T_1, T_2]$) = $u(\{A, D\}$, $T_1)$ + $u(\{B, C\}$, $T_2)$ + $u(\{A, D, E\}$, $T_3)$ = \$7 + \$14 + \$16 = \$37. However, with the definition in \cite{guo2014high}, the results is $EWU$($<$$\{A, D\} \rightarrow B$$>$, $[T_1, T_2]$) = $u(\{A, D\}$, $T_1)$ + $u(B$, $T_2)$ + $u(\{B, C\}$, $T_2)$ + $u(\{A, D, E\}$, $T_3)$ = \$7 + \$10 + \$14 + \$16 = \$47.

The \textit{EWU} \cite{wu2013mining} is an upper bound on utility measure by overestimating the overall utility of an episode in entire event sequence, but avoid missing any high-utility episodes.  This is justified by the following theorem as in \cite{wu2013mining}. However, a large amount of low-utility episodes still may be regarded as candidates since \textit{EWU} is a loose upper-bound.

\begin{definition}[High Weighted Utilization Episode \cite{wu2013mining}]
   \rm  Given a event sequence $S$, an episode is called High Weighted Utilization Episode (abbreviated as HWUE) in $S$  iff its \textit{EWU} is no less than \textit{minUtil} $\times$ $TU$. 
\end{definition}

\begin{theorem}[Episode-Weighted Downward Closure property]
	\label{theorem:EWU}
	\rm Let $\alpha$ and $\beta$ be two episodes, and $\gamma$ is a super-episode of $\alpha$ and $\beta$, either generated by \textit{Simult-Concatenate}($\alpha$, $\beta$) or generated by \textit{Serial-Concatenate}($\alpha$, $\beta$). The Episode-Weighted Downward Closure (abbreviated as EWDC) property means that if $EWU(\alpha) <$ \textit{minUtil} $\times $ $ TU$ or $EWU(\beta) <$ \textit{minUtil} $\times $ $ TU$, $\gamma$ is a low utility episode. It is important to note that \textit{EWU} is the upper bound of the episode as prefix when performing prefix-spanning. In other words, the non-HWUE may still be the sub-episode  of the final HUEs (as the suffix in HUEs). Thus, we can not remove those episodes which are not HWUEs when prefix-spanning for discover HUEs.  
\end{theorem}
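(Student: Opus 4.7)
My plan is to reduce the claim to a pointwise inequality on minimal occurrences and then sum. Fix a minimal occurrence $mo(\gamma) = [T_s', T_e']$ of $\gamma$ that satisfies the \textit{MTD} constraint, and assume $\alpha$ is the prefix of $\gamma$ so that $\gamma$ is either the $I$- or $S$-concatenation of $\alpha$ with $\beta$. By the definition of concatenation, the event sets realizing $\alpha$ inside $mo(\gamma)$ form an occurrence of $\alpha$ beginning at $T_s'$, and by choosing the minimal such end time I obtain some $T_e^{\alpha} \le T_e'$ with $mo(\alpha) = [T_s', T_e^{\alpha}] \in moSet(\alpha)$.

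Next, I would decompose the utility $u(\gamma, mo(\gamma))$ into a prefix part, contributed by the event sets realizing $\alpha$ over $[T_s', T_e^{\alpha}]$, and a tail part, contributed by the event sets realizing $\beta$ over $[T_e^{\alpha}, T_e']$. The prefix part equals $\sum_{i=1}^{|\alpha|} u(E_i, T_i) = u(\alpha, mo(\alpha))$. For the tail part, the key observation is that the events chosen to witness $\beta$ all lie in $[T_e^{\alpha}, T_s' + MTD]$, since the \textit{MTD} constraint forces $T_e' \le T_s' + MTD$. Consequently the tail is bounded above by $\sum_{i = T_e^{\alpha}}^{T_s' + MTD} u(SE_i, T_i)$, the sum over \emph{all} events in that window, which is exactly the ``rest'' term in the \textit{EWU} definition. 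Adding the two parts yields the pointwise bound $u(\gamma, mo(\gamma)) \le EWU(\alpha, mo(\alpha))$.

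To obtain the global statement I would sum over $moSet(\gamma)$. Each minimal occurrence of $\gamma$ has a distinct start time $T_s'$, and the map $mo(\gamma) \mapsto mo(\alpha)$ described above sends distinct starts to distinct elements of $moSet(\alpha)$; summing the pointwise inequality therefore gives $u(\gamma) \le \sum_{mo(\alpha) \in moSet(\alpha)} EWU(\alpha, mo(\alpha)) = EWU(\alpha)$. The contrapositive is the \textit{EWDC} property: $EWU(\alpha) < minUtil \times TU$ implies $u(\gamma) < minUtil \times TU$.

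The main obstacle is the injectivity of the occurrence-map and the cleanness of the tail bound. I expect two subtleties to require care: first, two different minimal occurrences of $\gamma$ could in principle share the same starting prefix occurrence of $\alpha$ if $\beta$ witnesses appear at multiple time points within one \textit{MTD} window; I would resolve this by showing that distinct minimal occurrences of $\gamma$ have distinct start times, so that their associated $\alpha$-occurrences are distinct. Second, the tail bound uses \emph{all} events in the \textit{MTD}-window, which is why the argument only works when $\alpha$ is the \emph{prefix}; if $\alpha$ were merely an arbitrary sub-episode appearing as a suffix, there would be no analogous window to the left, so no upper bound. This is precisely the caveat at the end of the statement, and I would record it as a remark to justify why unpromising episodes cannot be pruned as candidate suffixes of larger HUEs during the depth-first prefix-spanning search.
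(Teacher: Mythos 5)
Your argument is correct in substance but takes a genuinely different, more self-contained route than the paper. The paper's proof never decomposes utilities: it cites $|moSet(\alpha)| \geq |moSet(\gamma)|$ from the minimal-occurrence literature, asserts $EWU(\gamma) \leq EWU(\alpha)$ ``according to the definition of EWU,'' and chains $u(\gamma) \leq EWU(\gamma) \leq EWU(\alpha) < minUtil \times TU$. You bypass the intermediate quantity $EWU(\gamma)$ entirely and prove the pointwise bound $u(\gamma, mo(\gamma)) \leq EWU(\alpha, mo(\alpha))$ by splitting each minimal occurrence of $\gamma$ into the prefix realization of $\alpha$ and a tail absorbed by the remaining-utility window $[T_e^{\alpha}, T_s' + MTD]$, then summing over an injection of $moSet(\gamma)$ into $moSet(\alpha)$. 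What your version buys is a justification of the paper's unexplained middle inequality and a transparent explanation of why the bound is prefix-only (the window extends only to the right of the prefix), which the paper states but does not derive. One step you assert rather than prove: that the earliest-ending occurrence $[T_s', T_e^{\alpha}]$ of $\alpha$ starting at $T_s'$ actually belongs to $moSet(\alpha)$, i.e.\ is minimal. It is --- any occurrence of $\alpha$ strictly inside it must start later than $T_s'$ and end no later than the $\alpha$-part of $mo(\gamma)$, so it would extend, with the same $\beta$-witnesses, to an occurrence of $\gamma$ strictly inside $mo(\gamma)$, contradicting minimality of $mo(\gamma)$ --- but this needs to be said, since without it your image points need not lie in $moSet(\alpha)$ and the sum of $EWU(\alpha, mo(\alpha))$ over the image would not be dominated by $EWU(\alpha)$.
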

\begin{proof}
	Let $moSet$($\alpha$) = [$mo_{1}(\alpha)$, $mo_{2}(\alpha)$, $\ldots$, $mo_{x}(\alpha)$], and $moSet$($\gamma$) = [$mo_{1}'(\gamma)$, $mo_{2}'(\gamma)$, $\ldots$, $mo_{y}'(\gamma)$]. Because $\gamma$ = \textit{Simult-Concatenate}($\alpha$, $\beta$) or \textit{Serial-Concatenate}($\alpha$, $\beta$), the following relationship holds: $|moSet(\alpha)|\geq|moSet(\gamma)|$ \cite{ma2004finding}. According to the definition of \textit{EWU}, $EWU(\alpha)$ $ \geq $ $EWU(\gamma)$. Thus, if $EWU(\alpha) <$  \textit{minUtil} $\times$ $ TU$, we have $u(\gamma) \leq EWU(\gamma) \leq EWU(\alpha) <$ \textit{minUtil} $\times$ $TU$, which yields that $\gamma$ is low utility.
\end{proof}

Based on the above definitions for HUEM, we can obtain two important observations from the existing studies.

\textit{\textbf{Observation 1}}. Similar to the concept of \textit{TWU} and \textit{SWU}, \textit{EWU} serves as upper bound of an episode's utility and maintains the downward closure property. It is important to notice that both \textit{TWU} and \textit{SWU} have the global downward closure property, while \textit{EWU} does not guarantee this. In the running example, the episode  $\{A, B, D\} $ is not the HWUE since \textit{EWU}$(\{A, B, D\})$ = \$24, but its super-episodes $<$$\{D, E\} \rightarrow B \rightarrow \{A, B, D\} $$>$ and $<$$E \rightarrow B \rightarrow \{A, B, D\} $$>$ are the final HUEs under the above settings, as shown in Fig. \ref{table:HUEs}. Thus, the \textit{EWU} upper bound only has the local downward closure property in subtrees of LS-tree.

\textit{\textbf{Observation 2}}. In UP-Span, the strategy namely \underline{D}iscarding \underline{G}lobal unpromising \underline{E}vents (DGE) \cite{wu2013mining} is applied to remove the global unpromising 1-episodes (which \textit{EWU} are less than $minUtil \times TU$) from the complex event sequence. As mentioned earlier, this strategy cannot discover the complete set of HUEs and will lead to false results. For instance, if an 1-episode is not the HWUE, it still would be the part (sub-episode) of final HUEs. Under the same parameter settings, only episode $<$$D  \rightarrow B \rightarrow \{A, D\}$$>$ = \$49 is returned by UP-Span as the final HUEs in Fig. \ref{table:HUEs}. In the same way, TSpan also fails to identify the complete HUEs.

\subsection{Pruning Strategies for Searching HUEs}

Finding all high-utility episodes without a powerful pruning strategy would be infeasible due to the prohibitively large number of candidate patterns. As described in the definition of \textit{EWU}, the computation of $EWU(\alpha)$ w.r.t. one minimal occurrence $mo(\alpha)$ = [$T_s$, $T_e$], in fact, consists of two parts, an episode $\alpha$ its own utility w.r.t. $mo(\alpha)$, and the remaining utility of $\alpha$ starting from time point $T_e$ to $T_s$ + \textit{MTD}. Let $moSet(\alpha)$ denote the set of the minimal occurrences of $\alpha$ in $S$, then we can calculate $EWU(\alpha)$ by accumulating all the \textit{EWU} value in each minimal occurrence, such that \textit{EWU}($\alpha$) = $\sum _{i=1}^{n}EWU(\alpha, mo_{n}(\alpha))$. However, the first and second part have some overlaps with episode $\alpha$ during the computation. For example, in Fig. \ref{fig:data}, $EWU$($<$$\{A, D\} \rightarrow B$$>$, $[T_1, T_2]$) = $u(\{A, D\}$, $T_1)$ + $u(B$, $T_2)$ + $u(\{B, C\}$, $T_2)$ + $u(\{A, D, E\}$, $T_3)$ = \$7 + \$10 + \$14 + \$16 = \$47, thus the event $B$ in $T_2$ is overlap. Therefore, the estimated utility upper bound is still loose. To speed up the mining performance, we propose two optimized strategies to reduce the upper bound namely \textit{EWU}. Details are the optimized \textit{EWU} and corresponding pruning strategies are described below. More formally, to compute the optimized \textit{EWU} of the episodes, the concept of remaining utility of episode is defined below.

\begin{definition}[Remaining utility of an episode]
	\rm At a specific time point $T_i$, the remaining utility of an episode $\alpha$ in $T_i$ is the accumulative utilities of all events after this episode in $T_i$: $ru(\alpha, T_i)$ = $\sum_{e'\notin \alpha \wedge \alpha \prec e'}u(e', T_i)$.
	
\end{definition}

Basically, the remaining utility of an episode in $T_i$ means the sum of the utilities after this episode in $T_i$. For example, in $T_3$ of the running example,  $ru(A, T_3)$ = $u(D, T_3)$ + $u(E, T_3)$ = \$13, and $ru(D, T_3)$ = $u(E, T_3)$ = \$7.

\begin{definition}[Optimized Episode-Weighted Utilization (optimized \textit{EWU} version 1.0 is denoted as $EWU_{opt}$]
	\rm Assume an episode $\alpha$ = $<$$(E_1$), $(E_2)$, $\ldots$, $(E_{k-1})$, $(E_k)$$>$ with one of its minimal occurrences as $mo(\alpha)$, the optimized episode-weighted utilization of $\alpha$ w.r.t. $mo(\alpha)$ is defined below by removing the overlap utilities in $T_{e}$:  \textit{EWU}($\alpha$, $mo(\alpha)$) =  $\sum _{i=1}^{k}u(E_i, T_i)$ + $\sum _{i=T_{e}}^{T_{s} + MTD}u(SE_i, T_i)$ - $u(E_k, T_i)$ = $\sum _{i=1}^{k-1}u(E_i, T_i)$ + $\sum _{i=T_{e}}^{T_{s} + MTD}u(SE_i, T_i)$, where $SE_i$ is the simultaneous event set at the time point $T_i$ in $S$.

\end{definition}

\begin{definition}[Optimized Episode-Weighted Utilization]
	\rm The optimized episode-weighted utilization (optimized \textit{EWU} version 2.0 is denoted as $EWU_{opt'}$) of $\alpha$ = $<$$(E_1$), $(E_2)$, $\ldots$, $(E_{k-1})$, $(E_k)$$>$ w.r.t. $mo(\alpha)$ consists of three parts: 1) the utility of $\alpha$, 2) the remaining utility of $\alpha$ in $T_{e}$, and 3) the utility of all episodes in [$T_{e}+1$, $T_{s}$ + \textit{MTD}]. $EWU_{opt'}$ is defined as \textit{EWU}($\alpha$, $mo(\alpha)$) = $u(\alpha, mo(\alpha))$ + $ru(E_k, T_e)$ + $\sum _{i=T_{e}+1}^{T_{s} + MTD}tu(T_i)$ = $\sum _{i=1}^{k}u(E_i, T_i)$ + $ru(E_k, T_e)$ + $\sum _{i=T_{e}+1}^{T_{s} + MTD}tu(T_i)$, where time point $T_i$ within the satisfied \textit{MTD} interval $[T_e, T_{s}$ + \textit{MTD}]. Thus, for  $\alpha$ in $S$, we have the accumulative  \textit{EWU} w.r.t. $moSet$($\alpha$) = [$mo_{1}, mo_{2}, \ldots, mo_{n}$], such that  \textit{EWU}($\alpha$) = $\sum _{i=1}^{n}EWU(\alpha, mo_{n}(\alpha))$.
\end{definition}

Consider the episode $<$$\{B\} \rightarrow D$$>$ with \textit{MTD} = 2 in Fig. \ref{fig:data}, $EWU_{opt}$($<$$B \rightarrow D$$>$, $[T_2, T_3]$) = \{$u(B$, $T_2)$ + $u(D$, $T_3)$\} + \{$u(\{A, E\}$, $T_3)$ + $u(\{D, E\}$, $T_4)$\} = \{\$10\} + \{\$3 + \$6 + \$7\} + \{\$3 + \$14\} = \$43, while $EWU_{opt'}$($<$$B \rightarrow D$$>$, $[T_2, T_3]$) = \{$u(B$, $T_2)$ + $u(D$, $T_3)$\} + \{$u(E$, $T_3)$ + $u(\{D, E\}$, $T_4)$\} = \{\$10\} + \{\$6 + \$7\} + \{\$3 + \$14\} = \$40. Obviously, $EWU_{opt'}$($<$$B \rightarrow D$$>$, $[T_2, T_3]$)  is more tighter than $EWU_{opt}$($<$$B \rightarrow D$$>$, $[T_2, T_3]$). In general, the larger the average number of events in each time point $T_i$ is, the bigger difference of the results between $EWU_{opt'}$ and $EWU_{opt}$ is.

\begin{table}
	\centering
	\caption{Results of each 1-episode}
	\begin{tabular}{|c|l|c|c|} \hline
		\textbf{Episode}   &   \textbf{\textit{moSet}}   & \textbf{\textit{EWU}}   & \textbf{Utility}      \\ \hline
		$A$ & \{[$T_1,T_1$], [$T_3,T_3$], [$T_6,T_6$]\}  & \$110  &  \$47   \\ \hline
		$B$ & \{[$T_2,T_2$], [$T_5,T_5$], [$T_6,T_6$]\}  & \$105  &    \$49   \\ \hline
		$C$ & \{[$T_1,T_1$], [$T_2,T_2$]\}  & \$90  &    \$49   \\ \hline
		$D$ & \{[$T_1,T_1$], [$T_3,T_3$], [$T_4,T_4$], [$T_6,T_6$]\}  & \$161  &    \$51   \\ \hline
		$E$ & \{[$T_3,T_3$], [$T_4,T_4$]\}  & \$94  &    \$48   \\ \hline
	\end{tabular}
	\label{table:EWU}
\end{table}

Specifically, the \textit{EWU}\footnote{If not otherwise specified, the term \textit{EWU} is an equivalent to $EWU_{opt'}$ in the rest of this paper.} value of $\alpha$ in $S$ is always larger than or equal to the total utility of $\alpha$, as well as the total utility of any of its extension (also called super-episodes) in the search space. The results of each 1-episode in terms of $moSet$, \textit{EWU} and utility are shown in Table \ref{table:EWU}. In the conceptual LS-tree, the upper bound \textit{EWU} can ensure the \textit{local downward closure} property. Based on the above observations, we can use the following filtering strategies. Theorem \ref{theorem:EWU} gives us a way to track patterns that have potential to be HUEs in the subtree. This guarantees to return the exact high-utility episodes. Therefore, it is an important property that our algorithm utilizes the upper bound \textit{EWU} as the pruning strategy, as described below.

\textit{\textbf{Optimized \textit{EWU} strategy}}. When spanning the LS-tree rooted at an episode $\alpha$ as prefix, the UMEpi algorithm spans/explores the search space in a depth-first search way. If the \textit{EWU} of any node/episode $\alpha$ is less than $minUtil \times TU $, any of its child node would not be a final HUE, they can be regarded as low-utility episodes and pruned directly.

The conceptual LS-tree is a prefix-shared tree, and UMEpi extends the lower-level nodes/episodes to the higher-level ones in depth-first way. For each first layer node, they are the 1-HWUEs. Theorem \ref{theorem:EWU} establishes a theoretical basis for generating high-utility $l$-episodes ($l \geq 2$) from the 1-HWUEs. The basic idea of UMEpi is generating all 1-HWUEs by calculating the \textit{EWU} values of all 1-episodes and then directly extending $l$-episodes ($l \geq 2$). Thus, UMEpi searches for the $l$-HWUEs and $l$-HUEs by extending each 1-HWUE as prefix. For any $l$-episode node ($k \geq 2$) in the LS-tree, we can easily search for its 1-extensions using $I$-Concatenation and $S$-Concatenation.

Specifically, Theorem \ref{theorem:EWU} gives us the necessary conditions for computing all prefix-based HWUEs and HUEs. \textit{EWU} is the upper bound on utility of any extension of a node/episode in a subtree. Any subtree with non-HWUE episode $\alpha$ as the root can be pruned since all $\alpha$-prefixed episodes are not HWUEs. Pruning strategies are applied to remove nodes with low upper bound such as $ EWU(\alpha) < minUtil \times TU $.

\subsection{Main Procedure}

To clarify our methodology, we have illustrated the conceptual search space w.r.t. lexicographic sequence tree, the key properties of utility w.r.t. minimum occurrence, and the optimized \textit{EWU} strategy so far. Utilizing the above concepts and technologies, the main procedure of the designed UMEpi algorithm is described in Algorithm \ref{AlgorithmUMEpi}. To summarize, given $S$, the UMEpi algorithm works as follows: (1) In the Phase I, it first scans the event sequence once to obtain the 1-length candidates namely 1-HWUEs. (2) In the Phase II, UMEpi recursively calls the \textit{HUE-Span} procedure to discover the set of $l$-HWUEs and $l$-HUEs having $\alpha$ as prefix. It uses depth-first search for the rest of the mining process. Details are presented below. It takes four parameters as input: 1) a complex event sequence, $S$; 2) a user-specified utility-table, \textit{ptable}; 3) a maximum time duration, \textit{MTD}; and 4) a minimum utility threshold, \textit{minUtil}. The UMEpi algorithm first scans the entire event sequence once to construct the transformed $S'$, and to obtain all 1-length episodes as well as their associated minimal occurrences \textit{moSet} (Line 1). Then it calculates $ EWU(e) $ and the utility of each 1-episode $ e \in E$ in $S$  (Line 2). Besides, the total utility of $S$ can be calculated after these computations. If an 1-episode satisfies $u(e)$ $ \geq$ $ minUtil \times TU$, adds it into the set of valid HUEs  (Lines 5-7).  Here the built \textit{moSet} of all 1-events can be used to calculate the \textit{EWU} value in the later processes, and to avoid projecting the event sequence to a huge amount of sub-sequences.  Thereafter, UMEpi recursively calls the \textit{Span-SimultHUE} and \textit{Span-SerialHUE}  procedures, and finally outputs the complete set of HUEs. It is important to notice that other 1-events that \textit{EWU}$(e)$ $< minUtil \times TU$ also should be explored in the \textit{Span-SimultHUE} and \textit{Span-SerialHUE}  procedures. Because the non-HWUEs may still be the sub-episodes of the final HUEs. In addition, the processing order $\succ$ of events should be kept consistently in UMEpi after Line 3.  For each node/episode, UMEpi does not have to scan the event sequence multiple times to detect the necessary information from the events.

\begin{algorithm}
	\caption{The UMEpi algorithm}
	\label{AlgorithmUMEpi}
	\begin{algorithmic}[1]	
		\REQUIRE \textit{S}; \textit{ptable}; \textit{MTD}; \textit{minUtil}.
		\ENSURE \textit{HUEs}: the complete set of high-utility episodes.     

		\STATE scan original \textit{S} once to construct the transformed $S'$, and to obtain all 1-length episodes as well as their associated minimal occurrences;

		\STATE calculate the \textit{EWU} and real utility values of each 1-episode;

		\STATE sort all 1-episodes in the total order $\succ$;
		
		\FOR{each 1-episode $ e \in S'$ with order $\succ$}
			\IF{ \textit{EWU}$(e) \geq minUtil$ $\times$ $TU$}
			   \IF{ \textit{u}$(e) \geq minUtil$ $\times$ $TU$}
			       \STATE \textit{HUEs} $  \leftarrow $  \textit{HUEs} $ \cup$ $e$;
			   \ENDIF
			
				\STATE call \textbf{Span-SimultHUE}\textbf{($e$, $S'$, \textit{MTD}, \textit{minUtil})};
				
				\STATE call \textbf{Span-SerialHUE}\textbf{($e$, $S'$, \textit{MTD}, \textit{minUtil})};
			\ENDIF
			
		\ENDFOR 		 					    
		
		\STATE \textbf{return} \textit{HUEs}
	\end{algorithmic}
\end{algorithm}

\begin{algorithm}
	\caption{The Span-SimultHUE procedure}
	\label{Simul-search}
	
	\begin{algorithmic}[1]	
		\REQUIRE $\alpha$, $S'$, \textit{MTD}, \textit{minUtil}. 
		\ENSURE \textit{HUEs}: the set of high-utility episodes having $\alpha$ as prefix.    

		\STATE initialize \textit{simultEpiSet} = $\phi$;
				
		\FOR{each $mo(\alpha)$ = $[T_s, T_e] \in$ \textit{moSet}($\alpha$)}
			\STATE \textit{simultEpiSet} $  \leftarrow $ \textit{simultEpiSet} $ \cup$  \{$e$ $|$ simultaneous event $e$ occurs at $T_e$ and $e$ is after the last event in $e$\};
		\ENDFOR

		\FOR{each 1-event/episode $ e \in simultEpiSet$}
		    \STATE  simultaneous episode $\beta  \leftarrow $ \textbf{\textit{Simult-Concatenate($\alpha$, e})}, and calculate  its \textit{moSet}($\beta$);

			\STATE based on $S'$ and \textit{moSet}($\beta$), calculate the overall utility and \textit{EWU} of $\beta$;

			\IF { \textit{EWU}$(\beta) \geq minUtil$ $\times$ $TU$}
			
				\IF {$ u(\beta) \geq minUtil$ $\times$ $TU$}
					\STATE \textit{HUEs} $  \leftarrow $  \textit{HUEs} $ \cup$ $\beta$;
				\ENDIF

				\STATE call \textbf{Span-SimultHUE}\textbf{($\beta$, $S'$, \textit{MTD}, \textit{minUtil})};
				
				\STATE call \textbf{Span-SerialHUE}\textbf{($\beta$, $S'$, \textit{MTD}, \textit{minUtil})};
			\ENDIF
			
		\ENDFOR 		
		
		\STATE \textbf{return} \textit{HUEs}
	\end{algorithmic}	
\end{algorithm}

\renewcommand{\algorithmicrequire}{\textbf{Input:}}
\renewcommand{\algorithmicensure}{\textbf{Output:}}
\begin{algorithm}
	\caption{The Span-SerialHUE procedure}
	\label{Serial-search}
	
	\begin{algorithmic}[1]	
		\REQUIRE $\alpha$, $S'$, \textit{MTD}, \textit{minUtil}. 
		\ENSURE \textit{HUEs}: the set of high-utility episodes having $\alpha$ as prefix.    
		
		\STATE initialize \textit{serialEpiSet} = $\phi$;
		
		\FOR{each $mo(\alpha)$ = $[T_s, T_e] \in$ \textit{moSet}($\alpha$)}
			\FOR{each time point $t$ in [$T_{e}$ + 1, $T_{s}$ + \textit{MTD}]}
				\STATE \textit{serialEpiSet} $  \leftarrow $ \textit{serialEpiSet} $ \cup$   \{$e$ $|$ serial event $e$ occurs at $t$\};
			\ENDFOR 
		\ENDFOR

		\FOR{each 1-event/episode $ e \in serialEpiSet$}

			\STATE get serial episode $\beta  \leftarrow $ \textbf{\textit{Serial-Concatenate($\alpha$, e})}, and calculate  its \textit{moSet}($\beta$);
			
			\STATE based on $S'$ and \textit{moSet}($\beta$), calculate the overall utility and \textit{EWU} of $\beta$;

		\IF { \textit{EWU}($\beta) \geq minUtil$ $\times$ $TU$}		
			\IF {$ u(\beta) \geq minUtil$ $\times$ $TU$}
				\STATE \textit{HUEs} $  \leftarrow $  \textit{HUEs} $ \cup$ $\beta$;
			\ENDIF

			\STATE call \textbf{Span-SimultHUE}\textbf{($\beta$, $S'$, \textit{MTD}, \textit{minUtil})};
			
			\STATE call \textbf{Span-SerialHUE}\textbf{($\beta$, $S'$, \textit{MTD}, \textit{minUtil})};
		\ENDIF
		
		\ENDFOR 		
		
		\STATE \textbf{return} \textit{HUEs}
	\end{algorithmic}	
\end{algorithm}

The \textit{Span-SimultHUE} and \textit{Span-SerialHUE}  procedures are shown in Algorithm \ref{Simul-search} and Algorithm \ref{Serial-search}, respectively. Both of them take as input: 1) a prefix episode $\alpha$, 2) the transformed event sequence $S'$, 3) \textit{MTD}, and 4) $minUtil$. These two procedures are not the same. The \textit{Span-SimultHUE} procedure operates as follows. It first initializes \textit{simultEpiSet} as an empty set (Line 1),  then calculates all simultaneous events of $\alpha$ (that would be expanded simultaneously), based on $S'$ and \textit{moSet}($\alpha$) (Lines 2-3). After obtaining the set of all simultaneous events, it calls the function \textit{Simult-Concatenate($\alpha$, e}) to construct a new simultaneous episode $\beta$, and calculates  its \textit{moSet}($\beta$), as shown in Lines 4-5. Based on $S'$ and \textit{moSet}($\beta$), the overall utility and \textit{EWU} of $\beta$ can be quickly calculated (Line 6). Utilizing the associated \textit{moSet} and transformed event sequence $S'$, therefore, UMEpi can easily obtain the true utility and upper bound of the new generated episode.

After that, the designed \textit{EWU}  pruning strategy is used to determine whether the extensions of $\beta$ would be the HUEs and should be explored (Line 7, using the \textit{EWU} strategy). If the overall utility of $\beta$ is no less than $minUtil \times TU$, this episode will be added into the set of HUEs having $\alpha$ as prefix (Lines 8-9). After filtering the unpromising episode, only the promising episodes with a high \textit{EWU} upper bound would be explored for next extension (Lines 10-11). After all the extensions rooted at  $\alpha$ are performed and determined recursively using the depth-first search mechanism, it finally returns the set of HUEs that having the common prefix $\alpha$ (Line 12). The \textit{Span-SerialHUE} procedure has different operations, as shown in Lines 1-6, but Lines 7-13 as the same as that in the \textit{Span-SimultHUE} procedure. The details for constructing a super-episode (as know as simultaneous episode in \textit{Span-SimultHUE}, and serial episode in \textit{Span-SerialHUE}) of $\alpha$ are not the same. Due the intrinsic sequence-order and complexity, the number of combination of episodes is quite huge. Different from the previous works in frequent episode mining, UMEpi performs an efficient operation for candidate generation.  Besides, both $I$-Concatenation and $S$-Concatenation share a prefix $\alpha$, and they are only allowed to differ in their last event or element.

The UP-Span algorithm utilizes the projected mechanism \cite{pei2001mining} to generate sub-sequences by spanning prefixes, but this is a common trick in SPM to facilitate the prefix-growth process. Note that the projected mechanism is not utilized in our proposed UMEpi algorithm. In UMEpi, both the episode $\alpha$ itself and the minimal occurrence set \textit{moSet}($\alpha$) are stored simultaneously during the mining process. Together with the transformed event sequence $S'$, \textit{moSet}($\alpha$) already provides a complete representation of all sub-sequences w.r.t. $\alpha$. Thus, \textit{moSet}($\alpha$) is sufficient for computing the \textit{EWU} value and utility of $\alpha$ efficiently. In many situations, it suffices to discover the candidate episodes and once the set $moSet$ of their  minimal occurrences is known, all the required candidates and HUEs can be easily generated. Note that UP-Span in unnecessary to adopt the minimal occurrence to calculate the \textit{EWU} value.

\section{Experimental Study} \label{sec:experiments}

In this section, we report the performance study of the utility-driven UMEpi algorithm on several real-world datasets. For effectiveness evaluation, UMEpi is compared with the state-of-the-art approaches, UP-Span \cite{wu2013mining} and TSpan \cite{guo2014high}. Note that the UP-Span \cite{wu2013mining} approach does not calculate the \textit{EWU} and real utility of episodes by using the minimal occurrence. As discussed in Section \ref{sec:algorithm}, it suffers from several errors and incorrect mining results. Besides, the TSpan algorithm \cite{guo2014high} also fails to extract the correct HUEs. Thus, the execution time and memory cost of UP-Span and TSpan are not suitable to evaluate the efficiency of the developed UMEpi algorithm.  We perform the experiments to make an assessment of UMEpi's efficiency, in terms of:  1) How efficient is UMEpi with a variety of datasets and parameters? 2) How powerful are the different optimized \textit{EWU} strategies?  3) What impact would be caused by different processing order of 1-HWUEs?

To answer these questions, three variants of UMEpi are conducted in our experiments: UMEpi$_{baseline}$ denotes the proposed algorithm adopt the original \textit{EWU} as defined in TSpan, UMEpi adopts the optimized \textit{EWU} version 1.0 (\textit{EWU}$_{opt}$), and UMEpi+ utilizes the most tightest upper bound  \textit{EWU}$_{opt'}$.

\subsection{Data Description and Experimental Setup}

\textbf{Datasets}. Our experiments are conducted on four real-world datasets, including \textit{retail}\footnote{\url{http://fimi.ua.ac.be/data/}}, \textit{BMS}\footnote{\url{http://fimi.ua.ac.be/data/}}, \textit{foodmart}\footnote{\url{http://msdn.microsoft.com/enus/library/aa217032(v=sql.80).asp}}, \textit{chainstore}\footnote{\url{http://cucis.ece.northwestern.edu/projects/DMS/MineBench.html}}. For FIM and HUIM, these datasets are the general transaction data. For FEM and HUEM, they can be viewed as a single complex sequence when each item is regarded as an event and each transaction is viewed as a simultaneous event set.  The characteristics of these four datasets are described below.

\begin{itemize}

\item  \emph{\textbf{retail}}: it is collected from real-life retail data. There are totally 88,162 transactions with 16,470 distinct items.  And an average transaction length is 10.3 items. 

\item  \emph{\textbf{BMS}}: this click-stream data is collected from an e-commerce, and contains 59,601 sub-sequences with 497 distinct items. The average length of sub-sequences in BMS is 2.42 items with a standard deviation of 3.22.

\item \emph{\textbf{foodmart}}: a dataset of customer transactions from an anonymous chain store, provided by Microsoft SQL Server. There are totally 21,556 transactions and 1,559 distinct items.

\item  \emph{\textbf{chainstore}}: this dataset is obtained and transformed from NU-Mine Bench. As a real-life dataset of customer transactions from a retail store, chainstore contains 1,112,949 transactions with 46,086 distinct items/events. Besides, the average transaction length ups to 7.3 items.
\end{itemize}

More details of these datasets can be referred to the SPMF website\footnote{\url{http://www.philippe-fournier-viger.com/spmf/index.php?link=datasets.php}}, and all of them are published and available to researchers. Both foodmart and chainstore contain the embedding occur quantity and unit utility of each item, while retail and BMS only contain the information of items/events. Similar to the previous studies \cite{tseng2013efficient,liu2012mining}, we use a simulation method to randomly generate the internal and external utilities in retail and BMS: 1) generate the occur quantity (in range of 1 to 6) for each item in every transaction; 2) set the unit utility for each item (in range of 1 to 1000 by using a log-normal distribution).

\textbf{Evaluation platform}. In our experiments, all the algorithms are written in Java language. The source C++ code of TSpan is provided by its authors, and the implementation of UP-Span is available at SPMF website.  The experiments are conducted on a personal ThinkPad T470p computer with an Intel(R) Core(TM) i7-7700HQ CPU @ 2.80 GHz 2.81 GHz, 32 GB of RAM, and with the 64-bit Microsoft Windows 10 operating system. The Maximum JVM memory is set to 8 GB of RAM.

\textbf{Parameter settings}. Note that the default size of each dataset is 1000 (the 1000 events/transactions in front are selected),  and the maximal duration time of the desired high-utility episode is fixed set as \textit{MTD} = 4 when varying \textit{minUtils}. For each test dataset with a fixed size, when \textit{minUtil} is fixed set, \textit{MTD} is varied set from 1 to 6.  We run each method three times and report the average results (i.e., execution time, memory consumption). When the runtime exceeds 100,000 seconds or the algorithm is out of memory, we assume that there is no result of runtime and memory consumption. Then, the result of related patterns is marked as ``-".

\subsection{Effectiveness Evaluation}

We first study the effectiveness of UMEpi. First, we use the running example as a case study to evaluate the final patterns that discovered by the pioneer UP-Span \cite{wu2013mining}, the state-of-the-art TSpan \cite{guo2014high}, and the proposed UMEpi algorithm. The mining results from event sequence in Fig. \ref{fig:data} are plotted in Table \ref{table:example}, with a fixed \textit{MTD} = 3. Note that \textit{test}$_1 $ to \textit{test}$_6$ is related to \textit{minUtil}: 30\%, \textit{minUtil}: 35\%, \textit{minUtil}: 40\%, \textit{minUtil}: 45\%, \textit{minUtil}: 50\%, and \textit{minUtil}: 55\%, respectively. Obviously, both UP-Span and TSpan fail to extract the complete true high-utility episodes from event sequence. Although the number of HUEs extracted by UP-Span is close to the final HUEs returned by UMEpi, the utilities of some HUEs are incorrect.

\begin{table}[htb]
	\fontsize{5.5pt}{9pt}\selectfont
	\centering
	\caption{Discovered HUEs by three HUEM algorithms}
	\label{table:example}
	\begin{tabular}{|c|c|llllll|}
		\hline\hline
		\multirow{2}*{\textbf{Dataset}}&
		\multirow{2}*{\textbf{Algorithm}}
		&\multicolumn{6}{c|}{\textbf{\# HUEs when varying \textit{minUtil} with \textit{MTD} = 3}}\\
		\cline{3-8}
		&&\textit{test}$_1 $ & \textit{test}$_2$ & \textit{test}$_3$ & \textit{test}$_4$  &  \textit{test}$_5$  &  \textit{test}$_6$ \\ \hline

		&\textbf{UP-Span} &	 863  &	 647  &	  385  &	218   &	 113  &	 53 	 \\
		
		\textbf{Example}  &  \textbf{TSpan}   &	 14   &	 11  &	 6  &	5   &	1   &	0 	 \\
		(Fig. \ref{fig:data}) &\textbf{UMEpi}  &	891   &	717  &	476   &  301   &	 174   &	69 	 \\

		\hline
		
		\hline
\hline
\end{tabular}
\end{table}

In the following experiments, both UP-Span and TSpan are not compared since both of them fail to solve the addressed problem for mining high-utility episodes.  To further evaluate the effectiveness of UMEpi, we plot the results of candidate episodes and the final HUEs under various parameter settings in Table \ref{table:patterns1}. Note that \#HUEs denotes the number of final HUEs discovered by three variants of UMEpi (UMEpi$_{baseline}$, UMEpi, and UMEpi+), and the number of visited candidates/nodes of three variants is denoted as \#$N_1$, \#$N_2$ and \#$N_3$, respectively. In Table \ref{table:patterns1}, $\delta$ represents \textit{minUtil}, and its detailed value is shown in Fig. \ref{fig:Runtime}(a) to (d).  In Table \ref{table:patterns2}, the fixed \textit{minUtil} of each dataset is shown in Fig. \ref{fig:Runtime}(e) to (h).  If we look at it in another light, the number of candidate episodes can also be used to assess the effects of the adopted pruning strategies.

\begin{table}[htb]
	\fontsize{4.8pt}{10pt}\selectfont
	\centering
	\caption{\# patterns under varying \textit{minUtil} with fixed \textit{MTD}}
	\label{table:patterns1}
	\begin{tabular}{|c|c|llllll|}
		\hline\hline
		\multirow{2}*{\textbf{Dataset}}&
		\multirow{2}*{\textbf{\# Patterns}}
		&\multicolumn{6}{c|}{\textbf{\# patterns under different thresholds}}\\
		\cline{3-8}
		&&$ \delta_1 $ & $ \delta_2 $ & $ \delta_3 $ & $ \delta_4 $ &  $ \delta_5 $  & $ \delta_6 $ \\ \hline

 &\textbf{\#${N_1}$} &	77,758  &	67,478  &	59,204  &	52,474  &	46,912  &	42,206	 \\
	 
\textbf{retail}  &  \textbf{\#${N_2}$}   &	 69,936  &	60,580  &	53,406  &	47,624  &	42,538  &	38,320 	 \\
 &\textbf{\#${N_3}$}  &	 57,210  &	50,556  &	45,150  &	40,372  &	36,430  &	33,042	 \\
&  \#HUEs   &	640  &	541  &	462  &	391  &	345  &	299	 \\
\hline

&\textbf{\#${N_1}$} &	9,472  &	8,746  &	7,760  &	6,294  &	3,450  &	2,388 	 \\

\textbf{BMS}  &  \textbf{\#${N_2}$}   &	 8,556  &	2,464  &	2,432  &	2,408  &	2,364  &	2,338 	 \\
&\textbf{\#${N_3}$}  &	274  &	256  &	250  &	240  &	236  &	226	 \\
&  \#HUEs   &	 0  &	0  &	0  &	0  &	0  &	0 	 \\
\hline

&\textbf{\#${N_1}$} & -	  &  -	 & 	12,804,602 &	6,117,054 &	2,733,572 &	1,157,568 	 \\

\textbf{foodmart}  &  \textbf{\#${N_2}$}   &	5,550,420 &	2,542,604 &	1,092,060 &	436,324 &	164,712 &	58,954 	 \\
&\textbf{\#${N_3}$}  &	 369,006 &	162,018 &	67,648 &	27,080 &	11,248 &	5,322 	 \\
&  \#HUEs   &	34,064 &	13,313 &	4,826 &	1,582 &	476 &	122 	 \\
\hline

&\textbf{\#${N_1}$} &	153,856  &	43,454  &	24,492  &	1,972  &	1,524  &	1,520	 \\

\textbf{chainstore}  &  \textbf{\#${N_2}$}   & 934  &	928  &	918  &	916  &	912	  & 906 	 \\
&\textbf{\#${N_3}$}  &	658  &	656  &	652  &	650  &	644  &	640	 \\
&  \#HUEs   &	3  &	3  &	3  &	3  &	3  &	3	 \\
\hline

		\hline
\hline
	\end{tabular}
\end{table}

\begin{table}[htb]
	\fontsize{5.5pt}{9pt}\selectfont
	\centering
	\caption{\# patterns under varying \textit{MTD} with fixed \textit{minUtil}}
	\label{table:patterns2}
	\begin{tabular}{|c|c|llllll|}
		\hline\hline
		\multirow{2}*{\textbf{Dataset}}&
		\multirow{2}*{\textbf{\# Patterns}}
		&\multicolumn{6}{c|}{\textbf{\# patterns under different thresholds}}\\
		\cline{3-8}
		&& 1 & 2 & 3 & 4 &  5  & 6 \\ \hline

		&\textbf{\#${N_1}$} &	 2,084 &	7,180 &	22,896 &	67,478 &	178,830	 & -	 \\
		
		\textbf{retail}  &  \textbf{\#${N_2}$}   &	 1,806 &	6,258 &	20,386 &	60,580 &	162,090 &	451,756 	 \\
		&\textbf{\#${N_3}$}  &	1,426 &	5,022 &	16,836 &	50,556 &	137,056	 & 389,726	 \\
		&  \#HUEs   &	31 &	80 &	211 &	541 &	1,284 &	3,044	 \\
		\hline

		&\textbf{\#${N_1}$} &	1,478 &	2,564 &	4,082 &	9,472	 & 	- & -	 \\
		
		\textbf{BMS}  &  \textbf{\#${N_2}$}   &	274	 & 2,250 &	2,320 &	8,556	 & 9,068  & -	 \\
		&\textbf{\#${N_3}$}  &	 118 &	170 &	214 &	274 &	334 &	524 	 \\
		&  \#HUEs   &	 0 &	0 &	0 &	0 &	0 &	0 	 \\
		\hline

		&\textbf{\#${N_1}$} &	726 &	1,552 &	2,240 &	12,804,602	 & -	 & 	- \\
		
		\textbf{foodmart}  &  \textbf{\#${N_2}$}   &	724	 & 1,528 &	2,176 &	1,092,060 & -	 & 	- \\
		&\textbf{\#${N_3}$}  &	 722 &	1,510 &	2,096 &	67,648	 & 	- & -	 \\
		&  \#HUEs   &	0  &	0 &	0 &	4,826 	 & -	 & -	 \\
		\hline

		&\textbf{\#${N_1}$} &	178 &	372	 & 752 &	24,492 &	44,018 	 & 	- \\
		
		\textbf{chainstore}  &  \textbf{\#${N_2}$}   &	  172 &	362	 & 586 &	918 &	1,384	 & -	 \\
		&\textbf{\#${N_3}$}  &	 154 &	270	 & 420 &	652 &	940	 & -	 \\
		&  \#HUEs   &	 2 &	2 &	3 &	3 &	3	 & -	 \\
		\hline

		\hline
		\hline
	\end{tabular}
\end{table}

As shown in Table \ref{table:patterns1}, it can be clearly observed that the number of HUEs is always quite less than that of the intermediate candidates, and the number of candidates generated by the designed three variants always has the following relationship: $ \#N_1 \geq  \#N_2 \geq \#N_3$, under various \textit{MTD} and \textit{minUtil} thresholds.  These results are reasonable, the reason is that there are a huge number of candidates that are determined for mining HUEs, and three variants of UMEpi adopt different optimized \textit{EWU} strategies. To be more specific, a more tighter upper bound of \textit{EWU} may lead to a more smaller search space and more less candidates. For example, in the foodmart dataset with \textit{minUtil}: 0.88\%, the number of patterns is $\#N_1$: 12,804,602, $\#N_2$: 1,092,060, $\#N_3$: 67,648, and \#HUEs: 4,826, respectively. Consider the utility measure and \textit{MTD} constraint, we normally get the above observations also in other datasets. Therefore, it can clearly show the effect of the optimized \textit{EWU} pruning strategies.

Specifically, both the \textit{MTD} and \textit{minUtil} affect the results of visited candidates and HUEs, as shown form \#$N_1$ to \#$N_3$ and \#HUEs on each dataset. In general, the number of HUEs are always much smaller than that of candidates. The reason is that the actual search space of the HUEM task is huge although the number of actual high-utility episodes is rare. For example, it may return up to hundreds of high-utility episodes in the running example, which only has six event sets (time points) as presented at Fig. \ref{fig:data}.  When \textit{minUtil} is fixed on one specific dataset, the larger \textit{MTD} is, the larger the number of discovered patterns is. Consider the retail dataset, when \textit{MTD} is set as 1, $\#N_3$ = 1,426, \#HUEs = 31; when \textit{MTD} is changed to 6, $\#N_3$ = 389,726, \#HUEs = 3,044. Therefore, in general, the \textit{MTD} parameter cannot be set too large. Otherwise, it may easily lead to combinatorial explosion. Besides, it indicates that we need to develop an efficient HUEM algorithm with an acceptable mining efficiency to extract high-utility episodes from complex event sequence.

\subsection{Runtime Analysis}

The runtime analysis is given in this subsection. The ``runtime" indicates the average running time of each variant of the UMEpi algorithm, by varying \textit{minUtil} or \textit{MTD}. In Fig. \ref{fig:Runtime}, (a) to (d) are tested with a fixed \textit{MTD} (= 4) and varied \textit{minUtils}; (e) to (h) are tested with a fixed \textit{minUtil}  and varied \textit{MTDs} (from 1 to 6).  Note that the scale of the runtime is second.

\begin{figure*}[htbp]
	\centering 
	\includegraphics[trim=20 20 15 10,clip,scale=0.57]{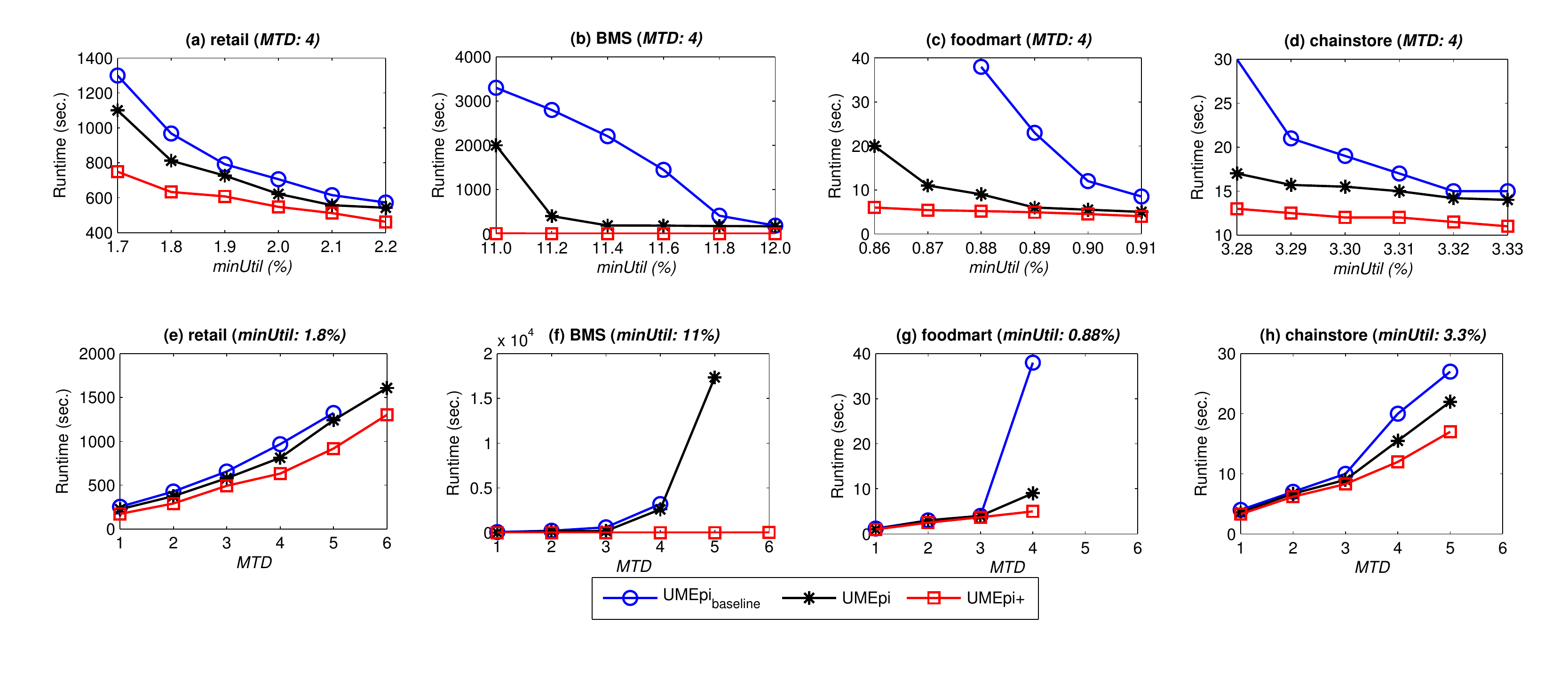}
	\captionsetup{justification=centering}
	\caption{Runtime under various parameters (\textit{MTD} and \textit{minUtil}).}
	\label{fig:Runtime}	
\end{figure*}

It can be seen that in each sub-figure, the total execution time of each compared algorithm is highly related to the number of final HUEs. Note that if the two user-specified  parameters are poorly chosen, like a large \textit{MTD} or a low \textit{minUtil} in Fig. \ref{fig:Runtime}, a relatively large number of episodes seems to become high utility, and thus the running time would increase dramatically. Besides, the differences of runtime among three compared algorithms are largely related to the number of their generated candidate episodes. For example, in the case of BMS as shown in Fig. \ref{fig:Runtime}(b) and Fig. \ref{fig:Runtime}(f), we can obviously observe the difference of the execution time  between UMEpi$_{baseline}$ and UMEpi+ and its trend. When $minUtil$ is set to 11\% on BMS dataset, the runtime of UMEpi+ is always close to 7 seconds, while  UMEpi$_{baseline}$ approximately has its processing time as 3,200 seconds.

An analysis of the effects of changing \textit{minUtil} (Fig. \ref{fig:Runtime}(a), Fig. \ref{fig:Runtime}(b), Fig. \ref{fig:Runtime}(c), and Fig. \ref{fig:Runtime}(d)) shows that a poorly chosen \textit{minUtil} might result in a long-running time. Besides, the \textit{MTD}  parameter gives a fine-grained control of the number of interesting results. When setting a large \textit{MTD}, the total numbers of episodes and HUEs increase. Thus, a longer running time is required to determine these patterns.  Note that, especially for the dataset where average transaction length is large, a low \textit{minUtil} or a large \textit{MTD} can dramatically increase the running time. This is due to a large number of (candidate) episodes being generated and determined. From the experimental results, there seems to be a cutoff point that separates high utility episodes from low utility ones. Small changes around the \textit{MTD} may have a noticeable effect on the runtime.

\subsection{Memory Cost Analysis}

Next, we analyze the memory cost performance of the proposed algorithm. In this set of experiments, we evaluate the effect on memory cost of different \textit{EWU} strategies in UMEpi for discovering high-utility episodes. Fig. \ref{fig:Memory} shows the maximal memory cost when we vary the \textit{minUtil} or \textit{MTD} with the fixed size of the target dataset. Note that in all datasets, we use Java API to count the maximal memory consumption for fair comparison. In Fig. \ref{fig:Memory}, the missing value means the runtime exceeds 100,000 seconds or the algorithm is out of memory.

\begin{figure*}[htbp]
	\centering 
	\includegraphics[trim=25 25 30 10,clip,scale=0.57]{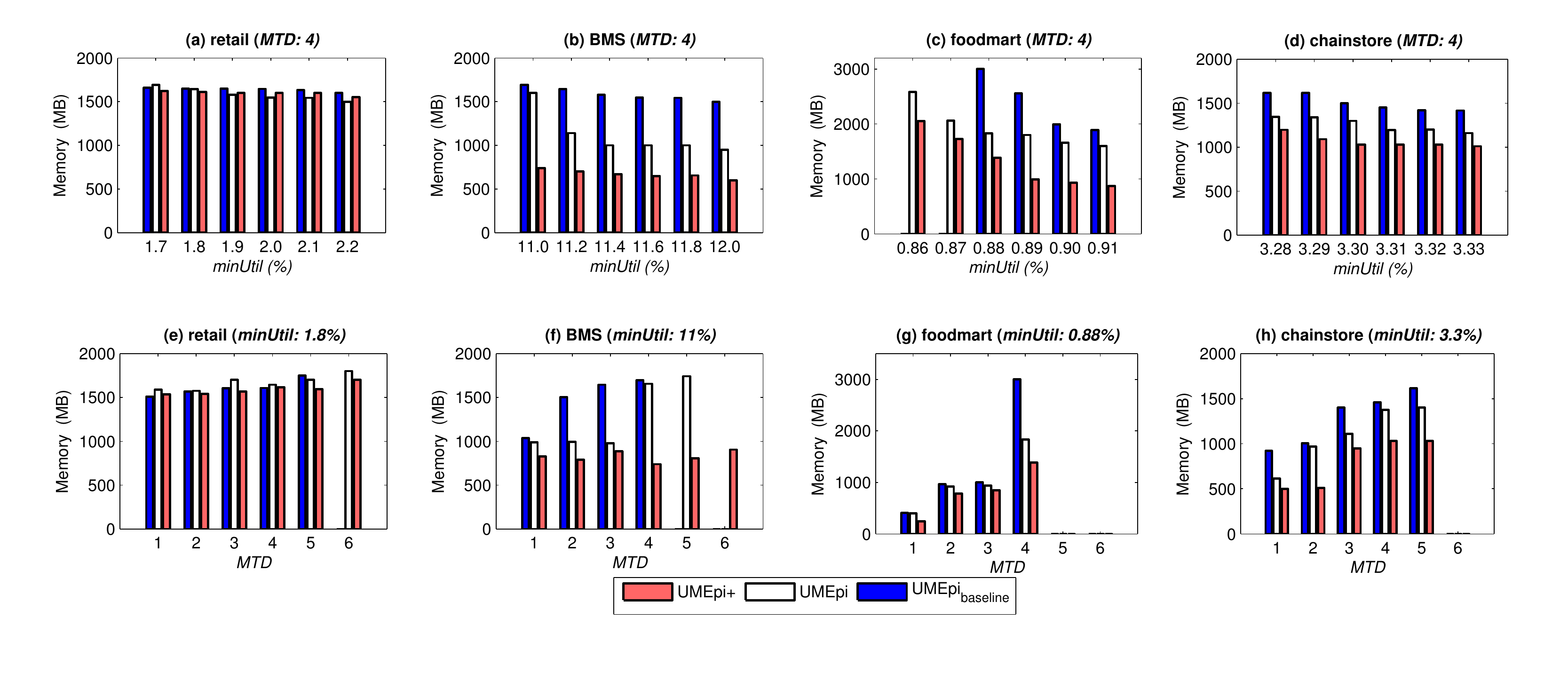}
	\captionsetup{justification=centering}
	\caption{Memory cost under various parameters (\textit{MTD} and \textit{minUtil}).}
	\label{fig:Memory}	
\end{figure*}

In foodmart dataset, all three variants have similar memory consumption no matter varying \textit{minUtil} or varying \textit{MTD}. In other datasets, UMEpi consumes a much smaller memory than that of UMEpi$_{baseline}$ and UMEpi+ performs the best under all parameter settings. Intuitively, the results of memory consumption of three UMEpi variants are directly related to the number of candidate patterns which are shown in Table \ref{table:patterns1}. We can also obtain the following relationship as: $ \#N_1 \geq  \#N_2 \geq \#N_3$, under various \textit{MTD} and \textit{minUtil} thresholds

Note that UMEpi+ consumes a much smaller memory space in each dataset than that of UMEpi$_{baseline}$ and UMEpi. Utilizing a more tighter upper bound of \textit{EWU}, the improved UMEpi+ algorithm can save the cost of candidate generation and speed up processing when spanning the LS-tree. As shown in Table \ref{table:patterns1}, the intermediate candidates can be reduced significantly, thereby reducing memory access and storage costs. For example, in the case of Fig. \ref{fig:Memory}(b) and Fig. \ref{fig:Memory}(f), it is clear that the difference of the maximal memory consumption of UMEpi$_{baseline}$, UMEpi and UMEpi+ is obvious. When \textit{minUtil} is set to 11.8\% on BMS dataset,  UMEpi$_{baseline}$ and UMEpi respectively consumes 1,000 MB and 990 MB, while UMEpi+ approximately consumes 676 MB, which is quite less than the other ones. This trend also can be observed from the other datasets. Therefore, we can conclude that the proposed new \textit{EWU} upper bound plays an active role in pruning the search space of the utility-driven UMEpi algorithm.

\subsection{Processing Order of Events}
In general, the processing order of a data mining algorithm may influence the mining performance. What processing order is more suitable for the proposed algorithm? To assess the influence caused by different processing orders, we compare the runtime and memory consumption of UMEpi+ using different processing orders. Totally four types of processing orders are tested on retail under the same parameter settings (\textit{MTD} = 4 and \textit{minUtil} is varied from 1.7\% to 2.2\%). The UMEpi$ _{occ} $, UMEpi$ _{lexi} $, UMEpi$ _{ewuas} $, and UMEpi$ _{ewude} $ refers to the occur order, the lexicographic order, the \textit{EWU} ascending order, and the \textit{EWU} descending order, respectively. As we can see from the experimental results in Fig. \ref{fig:OrderOfItems}(a), UMEpi$ _{lexi} $ requires the longest runtime, while UMEpi$ _{ewuas} $ is the fastest among the four compared orders. Consider the memory consumption as shown in Fig. \ref{fig:OrderOfItems}(b), obviously, UMEpi$ _{lexi} $ and UMEpi$ _{ewuas} $ have the similar performance on memory cost. The UMEpi$ _{ewude} $ and UMEpi$ _{occ} $ consume similar memory. To summary, UMEpi$ _{ewuas} $ always requires less memory consumption than that of other three orders, including UMEpi$ _{lexi} $, UMEpi$ _{ewude} $, and UMEpi$ _{occ} $. Taken together, the adopted \textit{EWU} ascending order of events (1-episodes) (UMEpi$ _{ewuas} $) can lead to the best performance in terms of runtime and memory usage. In addition, any processing order in UMEpi does not have an effect on the mining results  since the number of candidates and final HUEs in three variants are the same. Details of the number of pattens are not shown here due to the space limit.

\begin{figure}[!htbp]
	\centering
	\includegraphics[trim=120 230 360 0,clip,scale=0.53]{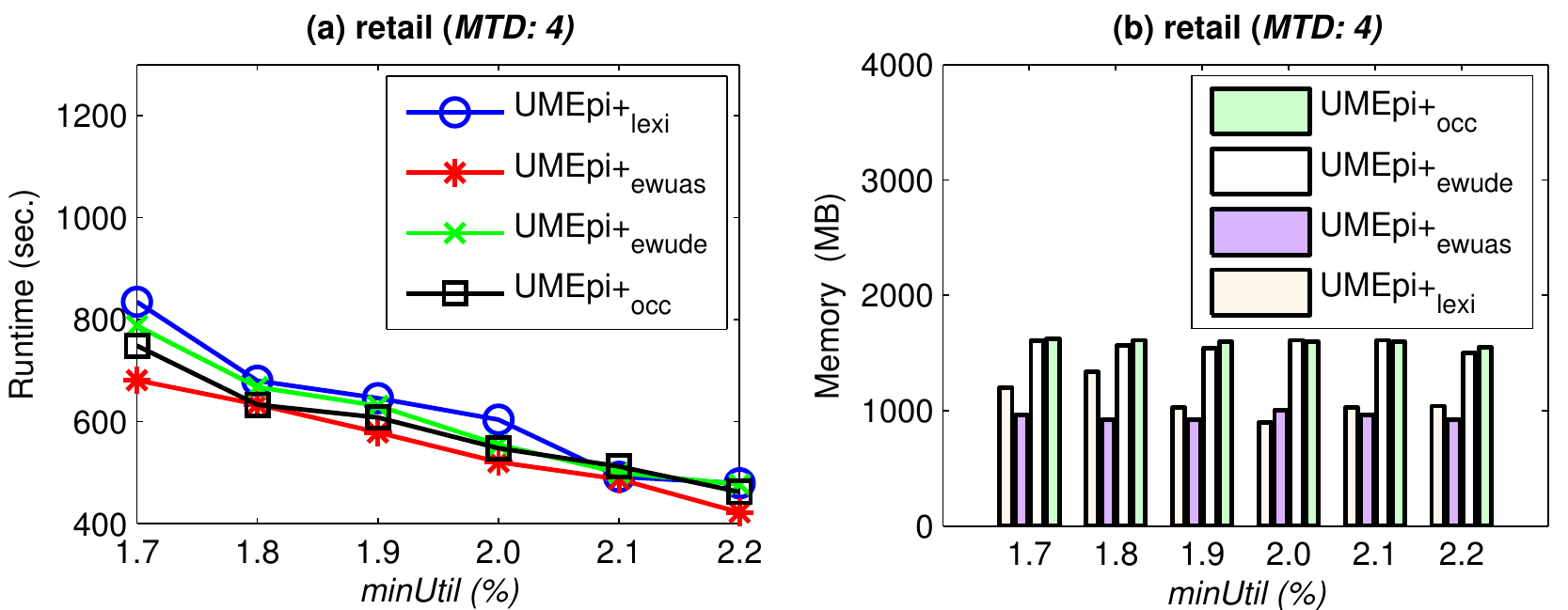}
	\caption{Effect of processing order.}
	\label{fig:OrderOfItems}
\end{figure}

\subsection{Scalability Test} 

\begin{figure*}[htbp]
	\centering 
	\includegraphics[trim=110 245 160 0,clip,scale=0.75]{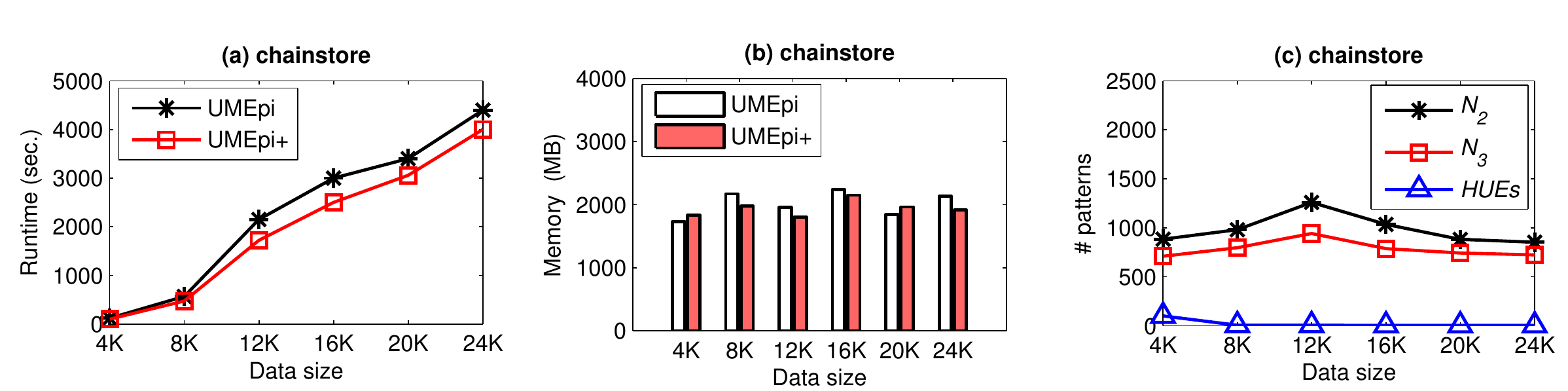}
	\captionsetup{justification=centering}
	\caption{Scalability test with different data size.}
	\label{fig:Scalability}	
\end{figure*}

The computational efficiency problem for HUEM might be more likely happened in long event sequences. Results in Fig. \ref{fig:Scalability} showed the scalability of how UMEpi performs with different numbers of events in a complex sequence. In Fig. \ref{fig:Scalability}, the parameter settings are \textit{MTD} = 4, \textit{minUtil} = 2\%, and dataset size is changed from 4,000 to 24,000 simultaneous event sets.  It shows that the execution time is linear  with respect to the number of events $|$$K$$|$ in chainstore, while the memory cost, the number of intermediate candidates and final HUEs are not that. We have also noticed that execution time of UMEpi and UMEpi+ increases rather gradually, and UMEpi+ always faster than UMEpi. With the fixed \textit{MTD} and \textit{minUtil}, we refer to the candidates and final HUEs whose numbers do not dramatically change, but more execution time is required due to the large sequence size for discovering high-utility episodes.


\section{Conclusions}
\label{sec:conclusion}

In this paper, we have presented a generic utility-driven episode mining framework named UMEpi for discovering high-utility episodes from a complex event sequence. A generic concept namely episode-weighted utilization (\textit{EWU}) is defined and  the optimization strategies are further introduced to reduce this upper bound on utility of episodes. To the best of our studies, UMEpi is the first algorithm that can successfully solve the problem of discovering  high-utility episodes.  Based on the optimized \textit{EWU} concept, UMEpi applied the powerful pruning strategies which utilize the downward closure property of \textit{EWU} to prune the search space. Moreover, UMEpi can directly discover high-utility episodes by avoiding performing costly operations of unpromising candidates. The extensive performance on several real-world datasets demonstrates the effectiveness of UMEpi compared to the existing approaches. Furthermore, UMEpi has good mining performance in terms of efficiency and scalability.

\section*{Acknowledgment}


We would like to thank  Dr. Guangming Guo for sharing the original C++ code of the TSpan algorithm.  This research was partially supported by the Shenzhen Technical Project under Grant No. JCYJ 20170307151733005 and No. KQJSCX 20170726103424709, and a grant from CSC (China Scholarship Council) Program. 




\bibliographystyle{IEEEtran}
\bibliography{main}



\vspace{-1.5cm}
\begin{IEEEbiography}[{\includegraphics[width=1in,height=1.25in,clip,keepaspectratio]{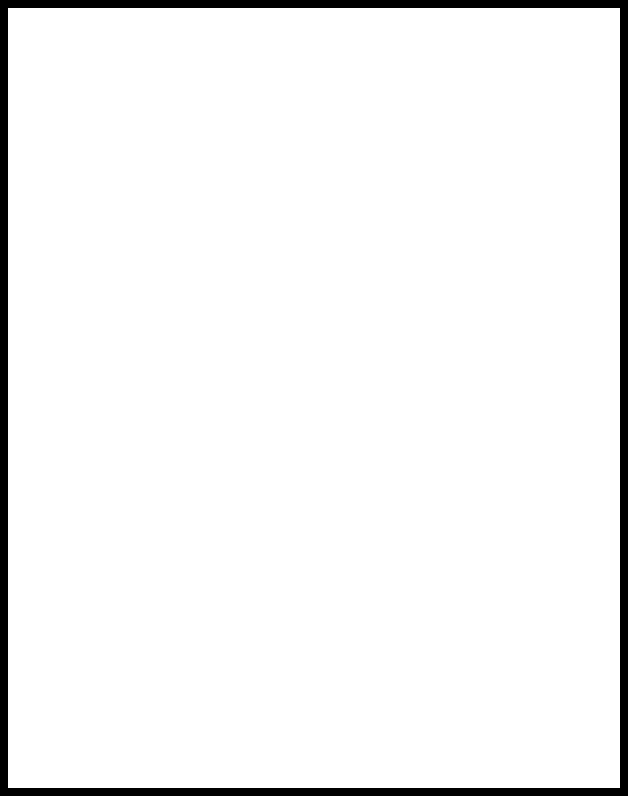}}]{Wensheng Gan (M'19)} received the Ph.D. in Computer Science and Technology, Harbin Institute of Technology (Shenzhen), Guangdong, China in 2019. He received the B.S. degree in Computer Science from South China Normal University, Guangdong, China in 2013. His research interests include data mining, utility computing, and big data analytics. He has published more than 50 research papers in peer-reviewed journals (i.e., ACM TKDD, ACM TDS, IEEE TKDE, IEEE TCYB, INS, KBS) and conferences (i.e., BigData, DSAA, DEXA, PAKDD), which have received more than 700 citations.
\end{IEEEbiography}

\vspace{-1.5cm}
\begin{IEEEbiography}[{\includegraphics[width=1in,height=1.25in,clip,keepaspectratio]{newAuthor.png}}]{Jerry Chun-Wei Lin (SM'19)}
	is an associate professor at Western Norway University of Applied Sciences, Bergen, Norway. He received the Ph.D. in Computer Science and Information Engineering, National Cheng Kung University, Tainan, Taiwan in 2010. His research interests include data mining, big data analytics, and social network. He has published more than 300 research papers in peer-reviewed international conferences and journals, which have received more than 3600 citations. He is the co-leader of the popular SPMF open-source data mining library and the Editor-in-Chief (EiC) of the \textit{Data Science and Pattern Recognition} (DSPR) journal, and Associate Editor of \textit{Journal of Internet Technology}. Dr. Lin is a Senior Member of ACM and IEEE, and a fellow of IET. 
\end{IEEEbiography}

\vspace{-1.5cm}
\begin{IEEEbiography}[{\includegraphics[width=1in,height=1.25in,clip,keepaspectratio]{newAuthor.png}}]{Han-Chieh Chao (SM'04)}
	has been the president of National Dong Hwa University since February 2016. He received M.S. and Ph.D. degrees in Electrical Engineering from Purdue University in 1989 and 1993, respectively. His research interests include high-speed networks, wireless networks, IPv6-based networks, and artificial intelligence. He has published nearly 500 peer-reviewed professional research papers. He is the Editor-in-Chief (EiC) of IET Networks and \textit{Journal of Internet Technology}. Dr. Chao has served as a guest editor for ACM MONET, IEEE JSAC, \textit{IEEE Communications Magazine}, \textit{IEEE Systems Journal}, \textit{Computer Communications}, \textit{IEEE Proceedings Communications}, \textit{Wireless Personal Communications}, and \textit{Wireless Communications \& Mobile Computing}. Dr. Chao is an IEEE Senior Member and a fellow of IET. 
\end{IEEEbiography}

\vspace{-1.5cm}
\begin{IEEEbiography}[{\includegraphics[width=1in,height=1.25in,clip,keepaspectratio]{newAuthor.png}}]{Philip S. Yu (F'93)}
	received the B.S. degree in electrical engineering from National Taiwan University, M.S. and Ph.D. degrees in electrical engineering from Stanford University, and an MBA from New York University. He is a distinguished professor of computer science with the University of Illinois at Chicago (UIC) and also holds the Wexler Chair in Information Technology at UIC. Before joining UIC, he was with IBM, where he was manager of the Software Tools and Techniques Department at the Thomas J. Watson Research Center. His research interests include data mining, data streams, databases, and privacy. He has published more than 1,300 papers in peer-reviewed journals (i.e., TKDE, TKDD, VLDBJ, ACM TIST) and conferences (KDD, ICDE, WWW, AAAI, SIGIR, ICML, etc). He holds or has applied for more than 300 U.S. patents. Dr. Yu was the Editor-in-Chief of \textit{ACM Transactions on Knowledge Discovery from Data}. He received the ACM SIGKDD 2016 Innovation Award, and the IEEE Computer Society 2013 Technical Achievement Award. Dr. Yu is a fellow of the ACM and the IEEE.
\end{IEEEbiography}

\end{document}